\theoremstyle{plain}
\newtheorem{theorem}{Theorem}
\newtheorem{corollary}[theorem]{Corollary}
\newtheorem{lemma}[theorem]{Lemma}
\newtheorem{claim}{Claim}
\theoremstyle{definition}
\newcommand{\ZZ}{{\mathbb{Z}}}
\newcommand{\RR}{{\mathbb{R}}}
\newcommand{\B}[1]{$b$-{#1}}
\newcommand{\U}{\mathcal{U}}
\newcommand{\C}{\mathcal{C}}
\newcommand{\F}{\mathcal{F}}
\newcommand{\I}{\mathcal{I}}
\DeclareMathOperator{\dom}{{\rm dom}}
\newcommand{\suppp}{\mathrm{supp}\sp{+}}
\newcommand{\suppm}{\mathrm{supp}\sp{-}}
\newcommand{\remove}[1]{}
\title{The $b$-bibranching Problem: \\
TDI System, Packing, and Discrete Convexity}
\author{
Kenjiro Takazawa\thanks{Department of Industrial and Systems Engineering, Faculty of Science and Engineering, 
Hosei University, Tokyo 184-8584, Japan.  
{\tt takazawa@hosei.ac.jp}} 
}
\date{February, 2018}
\begin{document}

\maketitle

\begin{abstract}
In this paper, 
we introduce the $b$-bibranching problem in digraphs, 
which is a common generalization of the bibranching and $b$-branching problems. 
The bibranching problem, 
introduced by Schrijver (1982), 
is a common generalization of the branching and bipartite edge cover problems. 
Previous results on bibranchings include 
polynomial algorithms, 
a linear programming formulation with total dual integrality, 
a packing theorem, 
and 
an M-convex submodular flow formulation. 
The $b$-branching problem, 
recently introduced by Kakimura, Kamiyama, and Takazawa (2018), 
is a generalization of the branching problem admitting higher indegree, 
i.e.,\ 
each vertex $v$ can have indegree at most $b(v)$. 
For $b$-branchings, 
a combinatorial algorithm, 
a linear programming formulation with total dual integrality, 
and a packing theorem for branchings 
are extended. 
A main contribution of this paper is to extend those previous results on 
bibranchings and $b$-branchings
to $b$-bibranchings. 
That is, 
we present a linear programming formulation with total dual integrality, 
a packing theorem, 
and an M-convex submodular flow formulation 
for $b$-bibranchings. 
In particular, 
the linear program and M-convex submodular flow formulations respectively 
imply polynomial algorithms for finding a shortest $b$-bibranching. 
\end{abstract}

\section{Introduction}

In this paper, 
we introduce the \emph{$b$-bibranching problem} in digraphs, 
which is a common generalization of two problems generalizing the branching problem. 
A main contribution of this paper is to provide common extensions of previous theorems on 
these two problems inherited from branchings: 
a linear programming formulation with total dual integrality, 
a packing theorem, 
and an M-convex submodular flow formulation 
for $b$-bibranchings. 

One problem to be generalized is the \emph{bibranching problem}, 
introduced by Schrijver \cite{Sch82} (see also Schrijver \cite{Sch03}). 
The bibranching problem is a common generalization of the branching and bipartite edge cover problems. 
Schrijver \cite{Sch82} proved the total dual integrality of a linear program describing the 
shortest bibranching problem, 
and 
a theorem on packing disjoint bibranchings, 
which extends Edmonds' disjoint branchings theorem \cite{Edm73}. 
The totally dual integral linear program implies the polynomial solvability of the shortest bibranching problem 
via the ellipsoid method, 
and 
it was followed by a faster combinatorial algorithm by Keijsper and Pendavingh \cite{KP98}. 
The integer decomposition property of the bibranching polytope is described in \cite{Sch03}. 
Later, 
Takazawa \cite{Tak12bibr} provided an M-convex submodular flow formulation \cite{Mur99}
for the shortest bibranching problem, 
which also implies a combinatorial polynomial algorithm. 
This formulation is based on 
the discrete convexity of the shortest branchings, 
which is pointed out in  \cite{Tak14} and 
indeed follows from the exchange property of branchings \cite{Sch00}. 
We remark that, 
in the proof for the exchange property, 
Edmonds' disjoint branching theorem \cite{Edm73} plays a key role. 
More recently, 
Murota and Takazawa \cite{MT17} revealed a relation between these two formulations: 
the 
M-convex submodular flow formulation \cite{Tak12bibr} is obtained from 
the linear programming formulation \cite{Sch82} through the Benders decomposition.

The other problem to be generalized is the \emph{$b$-branching problem}, 
recently introduced by Kakimura, Kamiyama, and Takazawa \cite{KKT18}. 
Here, 
$b$ is a positive integer vector on the vertex set of a digraph. 
As is well known, 
a branching is a common independent set of two matroids on the arc set of a digraph: 
one matroid is a partition matroid, 
i.e.,\ 
each vertex $v$ can have indegree at most one; 
and the other matroid is a graphic matroid. 
A $b$-branching is defined as a common independent set of two matroids generalizing these two matroids: 
one matroid imposes that each vertex $v$ can have indegree at most $b(v)$; 
and 
the other matroid is a sparsity matroid defined by $b$ (see Section \ref{SECbb} for precise description). 
We remark that 
a branching is a special case where $b(v)=1$ for every vertex $v$. 
Kakimura, Kamiyama, and Takazawa \cite{KKT18} presented 
a multi-phase greedy algorithm for finding a longest $b$-branching, 
which extends that for branchings \cite{Boc71,CL65,Edm67,Ful74}. 
A theorem on packing disjoint $b$-branchings is also presented in \cite{KKT18}, 
which extends Edmonds' disjoint branchings theorem \cite{Edm73} 
and leads to the integer decomposition property of the $b$-branching polytope. 

In this paper, 
we introduce $b$-bibranchings, 
which provide a common generalization of bibranchings and $b$-branchings. 
We demonstrate that $b$-bibranchings offer a reasonable generalization of 
bibranchings and $b$-branchings by proving extensions of the aforementioned 
results on bibranchings and $b$-branchings. 
We first present a linear programming formulation of the 
shortest $b$-bibranching problem 
and prove its total dual integrality, 
extending those for bibranchings \cite{Sch82} and  $b$-branchings \cite{KKT18}. 
We then prove a theorem on packing disjoint $b$-bibranchings, 
extending those for disjoint bibranchings \cite{Sch82} and 
disjoint $b$-branchings \cite{KKT18}. 
We consequently prove the integer decomposition property of the $b$-bibranching polytope. 
Finally, 
we present an M-convex submodular flow formulation for the shortest $b$-branchings, 
extending that for bibranchings \cite{Tak12bibr}. 

Our proof techniques, 
which might be of theoretical interest, 
are as follows. 
First, 
the total dual integrality is proved by extending the proof for bibranchings in Schrijver \cite{Sch03}. 
Second, 
the proof for 
the packing theorem is based on the supermodular coloring theorem \cite{Sch85}, 
which was used in an alternative proof \cite{Sch85} for the packing theorem for bibranchings. 
Our proof extends Tardos' proof \cite{Tar85} for the supermodular coloring theorem using generalized polymatroids. 
Finally, 
for the M-convex submodular flow formulation, 
we first prove 
an exchange property of $b$-branchings, 
which extends that for branchings \cite{Sch00} 
and 
follows from the theorem for packing $b$-branchings \cite{KKT18}. 
We then establish the discrete convexity of the shortest $b$-branchings. 
Based on this discrete convexity, 
we provide the M-convex submodular flow formulation of the shortest $b$-bibranching problem by extending the arguments in \cite{Tak12bibr}

The organization of this paper is as follows. 
In Section \ref{SECpre}, 
we review previous theorems on 
branchings, 
bibranchings, 
and 
$b$-branchings. 
A formal description of $b$-bibranchings is also presented. 
In Section \ref{SECtdi}, 
we provide a linear programming formulation of the shortest $b$-bibranching problem
and prove its total dual integrality. 
Section \ref{SECbbbpacking} is devoted to establishing a theorem on packing disjoint $b$-bibranchings. 
In Section \ref{SECm}, 
we prove an exchange property of $b$-branchings and 
then establish an M-convex submodular flow formulation for the shortest $b$-bibranching problem. 
Finally, 
we conclude the paper in Section \ref{SECconcl}. 

\section{Preliminaries}
\label{SECpre}

In this section, 
we review previous results on branchings, 
bibranchings, 
and $b$-branchings, 
which will be extended to $b$-bibranchings in subsequent sections. 
At the end of this section, 
we formally define $b$-bibranchings. 

\subsection{Branching}

Throughout this paper, 
we assume that a digraph is loopless. 
Let $D=(V,A)$ be a digraph with vertex set $V$ and arc set $A$. 
We denote an arc $a$ from $u\in V$ to $v \in V$ by $uv$. 
For an arc $a$, 
the initial and terminal vertices are denoted by $\partial ^+ a$ and $\partial^- a$, 
respectively. 
That is, 
if $a = uv$, 
then 
$\partial ^+ a = u$ and $\partial^- a =v$. 
Similarly, 
for an arc subset $B \subseteq A$, 
let 
$\partial^+ B =\bigcup_{a \in B}\{\partial^+ a\}$ 
and 
$\partial^- B =\bigcup_{a \in B}\{\partial^- a\}$. 
For a vertex subset $X \subseteq V$, 
the subgraph of $D$ induced by $X$ is denoted by $D[X]=(X, A[X])$. 
Similarly, for an arc subset $B \subseteq A$, 
the set of arcs in $B$ induced by $X$ is denoted by $B[X]$. 
For vertex subsets $X,Y \subseteq V$, 
let $B[X,Y] = \{a \in B \colon \mbox{$\partial^+a \in X$, $\partial^- a \in Y$}\}$. 
For a vector $x \in \RR^A$ and $B \subseteq A$, 
we denote $x(B)=\sum_{a \in B}x(a)$. 

Let $B \subseteq A$ and 
$\emptyset \neq X \subsetneq V$. 
The set of arcs in $B$ from $X$ to $V \setminus X$ is denoted by $\delta^+_{B}(X)$, 
and 
the set of arcs from $V \setminus X$ to $X$ by $\delta^-_{B}(X)$. 
That is, 
$\delta_{B}^+(X) = \{a \in B \colon \mbox{$\partial^+a \in X$, $\partial^-a \in V \setminus X$}\}$ 
and 
$\delta_{B}^-(X) = \{a \in B \colon \mbox{$\partial^+a \in V \setminus X$, $\partial^-a \in X$}\}$. 
We denote 
$d_B^+(X) = |\delta_B^+(X)|$ 
and 
$d_B^-(X) = |\delta_B^-(X)|$. 
If $B=A$, 
$\delta_{A}^+(X)$ and $\delta_{A}^-(X)$ are often abbreviated as 
$\delta^+(X)$ and $\delta^-(X)$, respectively. 
Also, 
if $X \subseteq V$ is a singleton $\{v\}$, 
then $\delta_{B}^+(\{v\})$, $\delta_{B}^-(\{v\})$,  
$d_{B}^+(\{v\})$, and $d_{B}^-(\{v\})$ are often abbreviated as 
$\delta_{B}^+(v)$, $\delta_{B}^-(v)$, 
$d_{B}^+(v)$, and $d_{B}^-(v)$, 
respectively. 

An arc subset $B\subseteq A$ is called a \emph{branching} 
if $d^-_B(v) \le 1$ for each $v \in V$ and 
the subgraph $(V,B)$ is acyclic. 
An arc subset $B\subseteq A$ is a \emph{cobranching} 
if the reversal of the arcs in $B$ is a branching. 
For a branching $B \subseteq A$, 
define the \emph{root set} $R(B)$ of $B$ by $R(B) = V \setminus \partial^- B$. 
For a cobranching $B$, 
its counterpart $R^*(B)$ is defined by 
$R^*(B) = V \setminus \partial^+ B$. 

As is well known, 
the branchings in a digraph form a special case of matroid intersection. 
Indeed, 
an arc subset is a branching if and only if 
it is a common independent set of a partition matroid $(A, \I_1)$ and 
a graphic matroid $(A,\I_2)$, 
where 
\begin{align}
\label{EQbdeg}
{}&{}\I_1=\{B \subseteq A \colon \mbox{$d^-_B(v) \le 1$ ($v \in V$)}\}, \\
\label{EQbsp}
{}&{}\I_2=\{B \subseteq A \colon \mbox{$|B[X]| \le |X|-1$ ($\emptyset \neq X \subseteq  V)$}\}.
\end{align}

In 
the \emph{longest branching problem}, 
given a digraph $D=(V,A)$ and 
arc weights $w \in \RR\sp{A}$, 
we are asked to find a branching $B$ maximizing $w(B)$. 
The longest branching problem is endowed with a linear programming formulation with 
total dual integrality, 
which is a special case of that for matroid intersection. 
That is, 
the following linear program in variable $x \in \RR^A$ 
is a linear relaxation of the 
longest branching problem, 
where the system \eqref{EQbrpoly1}--\eqref{EQbrpoly3} is a 
linear relaxation of 
matroid constraints \eqref{EQbdeg} and \eqref{EQbsp}, 
and thus it is box-TDI. 
\begin{alignat}{3}
&{}\mbox{maximize}& \quad {}&{}\sum_{a \in A}w(a) x(a) &&\\
\label{EQbrpoly1}
&{}\mbox{subject to}& \quad &{}x(\delta^-(v)) \le 1 \quad {}&&{}(v \in V), \\
\label{EQbrpoly2}
&&&{}x(A[X]) \le |X|-1 \quad    {}&&{}(\emptyset \neq X \subseteq V), \\
\label{EQbrpoly3}
&&&{} x(a) \ge 0               {}&&{}(a\in A). 
\end{alignat}

\begin{theorem}[see \cite{Sch03}]
The linear system \eqref{EQbrpoly1}--\eqref{EQbrpoly3} is box-TDI\@. 
In particular, 
the linear system \eqref{EQbrpoly1}--\eqref{EQbrpoly3} determines the branching polytope. 
\end{theorem}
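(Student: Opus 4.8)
The plan is to recognize the system \eqref{EQbrpoly1}--\eqref{EQbrpoly3} as the common independent set polytope of the two matroids $(A,\I_1)$ and $(A,\I_2)$ from \eqref{EQbdeg}--\eqref{EQbsp}, and then to invoke the box-total dual integrality of matroid intersection. First I would verify that each block of constraints describes the independence polytope of the corresponding matroid. The inequalities \eqref{EQbrpoly1} together with \eqref{EQbrpoly3} describe the independence polytope of the partition matroid $(A,\I_1)$: since $\I_1$ is the direct sum of rank-one uniform matroids on the parts $\delta^-(v)$, every rank inequality $x(U)\le r_1(U)$ is implied by the localized inequalities $x(\delta^-(v))\le 1$ and $x\ge 0$. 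The inequalities \eqref{EQbrpoly2} together with \eqref{EQbrpoly3} describe the independence (forest) polytope of the graphic matroid $(A,\I_2)$; this is the classical forest-polytope description, in which the general rank inequality is captured by the induced-subgraph constraints $x(A[X])\le |X|-1$. Hence the solution set of \eqref{EQbrpoly1}--\eqref{EQbrpoly3} is exactly the intersection of these two matroid polytopes.

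The main tool is Edmonds' matroid intersection theorem in its box-TDI form (as presented in \cite{Sch03}): for two matroids with rank functions $r_1,r_2$ on a common ground set, the system $x\ge 0$, $x(U)\le r_1(U)$, $x(U)\le r_2(U)$ ($U\subseteq A$) is box-TDI, and in particular integral. I would emphasize that this genuinely needs the intersection theorem and is not a formal consequence of the two subsystems being separately box-TDI, since the intersection of two box-TDI systems need not be box-TDI. Granting the matroid intersection theorem, box-TDI of the branching system and integrality of the branching polytope follow once the full rank-inequality system is reconciled with the reduced form \eqref{EQbrpoly1}--\eqref{EQbrpoly3}. The integrality then yields the second assertion directly: every vertex of the polytope is a $0/1$ vector satisfying \eqref{EQbdeg}--\eqref{EQbsp}, hence the incidence vector of a branching, so the polytope is the convex hull of the branchings.

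The step I expect to be the main obstacle is precisely this reconciliation, namely passing from the matroid intersection system written with all inequalities $x(U)\le r_i(U)$ to the localized system \eqref{EQbrpoly1}--\eqref{EQbrpoly3} while preserving box-total dual integrality, since replacing or dropping constraints can destroy the TDI property even when the polytope is unchanged. I would handle it by an uncrossing argument on an optimal dual solution: for the partition matroid the dual weight can be pushed onto the individual parts $\delta^-(v)$, and for the graphic matroid, using the submodularity of $X\mapsto |X|-1$ on induced edge sets, the support of the dual optimum can be uncrossed to a laminar family of vertex subsets, whose associated constraints are exactly those of the form $x(A[X])\le |X|-1$. Carrying the integrality and the box bounds $\ell\le x\le u$ through this uncrossing, so as to exhibit an integral optimal dual solution for the reduced system, is the technical heart of the argument.
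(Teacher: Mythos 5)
Your proposal is correct and takes essentially the same route as the paper: the paper states this preliminary result with a citation to \cite{Sch03}, justifying it exactly by the observation that \eqref{EQbrpoly1}--\eqref{EQbrpoly3} is a linear relaxation of the matroid constraints \eqref{EQbdeg} and \eqref{EQbsp}, so that box-total dual integrality follows from the matroid intersection theorem. The localization/uncrossing step you single out (passing from the system with all rank inequalities to the localized constraints on $\delta^-(v)$ and $A[X]$ while preserving box-TDI) is precisely the standard argument carried out in \cite{Sch03}, so your plan fills in the same details rather than diverging from them.
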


A theorem for packing disjoint branchings is due to Edmonds \cite{Edm73}. 
For a positive integer $k$, 
let $[k]$ denote the set $\{1,\ldots, k\}$. 

\begin{theorem}[Edmonds \cite{Edm73}]
\label{THMbrpacking}
Let $D=(V,A)$ be a digraph and $k$ be a positive integer. 
For subsets $R_1,\ldots, R_k$  of $V$, 
there exist disjoint branchings $B_1,\ldots, B_k$ such that 
$R(B_j)=R_j$ for each $j \in [k]$ if and only if 
\begin{align}
\notag
d^-_A(X) \ge |\{ j \in [k] \colon R_j \cap X = \emptyset\}| \quad (\emptyset \neq X \subseteq V). 
\end{align}
\end{theorem}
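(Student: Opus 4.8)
The plan is to prove the two directions separately, with \emph{necessity} being routine and \emph{sufficiency} carrying all the difficulty. For necessity, suppose disjoint branchings $B_1,\ldots,B_k$ with $R(B_j)=R_j$ exist, and fix a nonempty $X\subseteq V$ together with an index $j$ such that $R_j\cap X=\emptyset$. Pick any $v\in X$; since $v\notin R_j=R(B_j)$ it has a unique entering arc in $B_j$, and following parent arcs backwards produces a sequence $v,p(v),p(p(v)),\ldots$ that cannot repeat a vertex (acyclicity of $B_j$) and hence terminates at a root of $B_j$, which lies in $R_j\subseteq V\setminus X$. Thus this backward path crosses from $X$ to $V\setminus X$, and the arc at that crossing belongs to $\delta^-_{B_j}(X)$, so $d^-_{B_j}(X)\ge 1$. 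Summing over all such $j$ and using that the $B_j$ are arc-disjoint gives $d^-_A(X)\ge|\{j\in[k]\colon R_j\cap X=\emptyset\}|$.

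For sufficiency I would argue by induction on $k$, peeling off a single branching. The key is to construct one branching $B_k$ with $R(B_k)=R_k$ inside $A$ so that the \emph{residual} arc set $A\setminus B_k$ still satisfies the hypothesis for the remaining requirements $R_1,\ldots,R_{k-1}$; induction then supplies disjoint $B_1,\ldots,B_{k-1}\subseteq A\setminus B_k$, and together with $B_k$ these are the desired $k$ disjoint branchings. Writing $s(X)=d^-_A(X)-|\{j\in[k]\colon R_j\cap X=\emptyset\}|\ge 0$ for the surplus, a short computation shows that what is required of $B_k$ is that $d^-_{B_k}(X)\le s(X)$ whenever $R_k\cap X\neq\emptyset$ and $d^-_{B_k}(X)\le s(X)+1$ whenever $R_k\cap X=\emptyset$. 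I would build $B_k$ greedily as a forest with $R(B_k)\supseteq R_k$ throughout, at each stage choosing a currently rootless vertex $v_0\notin R_k$ and an arc $uv_0$ with $u$ outside the subtree of $v_0$ (to preserve acyclicity) whose transfer into $B_k$ does not violate the residual inequalities.

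The \emph{main obstacle} is exactly this arc-selection step: proving that a usable arc always exists until $R(B_k)=R_k$ is reached. To this end I would record that the cut function $d^-_A$ is submodular while $X\mapsto|\{j\colon R_j\cap X=\emptyset\}|$ is supermodular, so the surplus $s$ is submodular and the family of \emph{tight} sets $\{X\colon s(X)=0\}$ is closed under union and intersection. An arc $uv_0$ fails to be usable only if it creates a cycle or if it is the unique residual arc entering some tight set containing $v_0$; assuming every candidate arc fails, I would uncross the offending tight sets to obtain a single nonempty set $X^\ast$ that contains the uncovered vertices, avoids the relevant $R_j$, and is entered by too few arcs of $A$, contradicting $s(X^\ast)\ge 0$. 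This uncrossing-and-counting argument — showing that failure of augmentation forces a violated cut — is the technical heart of the proof, while the remaining bookkeeping (acyclicity, disjointness, and the bounds on $d^-_{B_k}$ across tight sets) is routine once it is in place.
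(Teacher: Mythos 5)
First, a point of reference: the paper does not prove this statement at all --- it is Edmonds' disjoint branchings theorem, quoted from \cite{Edm73} as background in the preliminaries --- so your proposal can only be judged against the standard proofs in the literature, and your plan is recognizably the classical augmenting (Lov\'asz-style) proof. Your necessity argument is correct, and your reduction of sufficiency to the construction of a single branching $B_k$ with $R(B_k)=R_k$ whose removal preserves the condition for $R_1,\dots,R_{k-1}$ is the right decomposition; your computation that this amounts to $d^-_{B_k}(X)\le s(X)$ when $R_k\cap X\neq\emptyset$ and $d^-_{B_k}(X)\le s(X)+1$ when $R_k\cap X=\emptyset$ is also correct.

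The augmentation step, however, has a genuine gap as you set it up. You grow $B_k$ by picking \emph{any} rootless vertex $v_0\notin R_k$ and hanging it below \emph{any} tail $u$ outside the subtree of $v_0$; this allows $B_k$ to contain, at intermediate stages, trees rooted at vertices not in $R_k$. Once such hanging trees exist, the greedy can dead-end purely for acyclicity reasons, with no violated cut to extract. Concretely, take $k=1$, $V=\{r,a,b\}$, $R_1=\{r\}$, and arcs $ra$, $ab$, $ba$. With $k=1$ there are no residual inequalities, so your rule legally permits adding $ba$ first ($b$ is outside the subtree $\{a\}$). Now the unique rootless vertex outside $R_1$ is $b$, and the unique arc entering $b$ is $ab$, whose tail lies inside $b$'s subtree $\{a,b\}$: every candidate arc fails, yet the cut condition holds and the arborescence $\{ra,ab\}$ exists. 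Hence your key lemma --- that failure of augmentation can be uncrossed into a set $X^*$ with $s(X^*)<0$ --- is false for this greedy. The repair is to grow $B_k$ only from the set $U$ of vertices already reached from $R_k$, adding arcs from $U$ to $V\setminus U$; then uncovered vertices stay isolated, acyclicity is never an obstruction, and the only obstructions are residually tight sets. One then takes an inclusion-wise \emph{minimal} residually tight set $X$ with $X\setminus U\neq\emptyset$, argues by counting (no arc of $B_k$ enters $V\setminus U$, and $R_k\subseteq U$ forces the demand of $X\setminus U$ under the original $k$-set condition to exceed $|\{j\le k-1\colon R_j\cap X=\emptyset\}|$) that some residual arc runs from $U\cap X$ into $X\setminus U$, and shows by uncrossing against the minimality of $X$ that this arc enters no tight set; note this produces an addable arc into \emph{some} uncovered vertex, not into a prescribed $v_0$. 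Finally, a smaller slip: an arc fails not when it is ``the unique residual arc entering some tight set'' but whenever it enters \emph{any} residually tight set of positive demand.
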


\subsection{Bibranching}

\subsubsection{Definition}

Let $D=(V,A)$ be a digraph, 
where 
$V$ is partitioned into two nonempty subsets $S$ and $T$. 
That is, 
$\emptyset \neq S \subsetneq V$ 
and 
$T = V \setminus S$. 
Schrijver \cite{Sch82} defined that 
an arc subset $B \subseteq A$ is a \emph{bibranching} if 
it satisfies the following two properties: 
\begin{align}
\label{EQbibrdef1}
&{}\mbox{every vertex $v \in T$ is reachable from some vertex in $S$ in the subgraph $(V,B)$}, \\
\label{EQbibrdef2}
&{}\mbox{every vertex $u \in S$ reaches some vertex in $T$ in the subgraph $(V,B)$}. 
\end{align}
Without loss of generality, 
we assume that $D$ does not have an arc from $T$ to $S$. 

Observe that bibranchings offer a common generalization of branchings and bipartite edge covers. 
If $S$ is a singleton $\{s\}$, 
then 
an inclusion-wise minimal bibranching is a branching $B$ with $R(B) = \{s\}$. 
If 
$A[S] = A[T]=\emptyset$, 
then 
the digraph $D$ is bipartite and 
a bibranching is an edge cover in $D$.

In the \emph{shortest bibranching problem}, 
we are given nonnegative arc weights $w \in \RR_+^A$ 
and asked to find a bibranching $B$ minimizing $w(B)$.

An alternative perspective on bibranchings is described in Murota and Takazawa \cite{MT17}:  
an arc subset $B \subseteq A$ is a bibranching if 
\begin{align}
\label{EQMT1}
&\mbox{$d_B^-(v) \ge 1$ for each $v \in T$}, \\
&\mbox{$d_B^+(u) \ge 1$ for each $u \in S$}, \\
&\mbox{$B[T]$ is a branching},\\ 
\label{EQMT4}
&\mbox{$B[S]$ is a cobranching}.   
\end{align}
Although these two definitions slightly differ, 
there would be no confusion 
in considering the shortest bibranching problem and packing disjoint bibranchings.

\subsubsection{Totally dual integral formulation}

An arc subset $C \subseteq A$ is called a \emph{bicut} if 
$C = \delta^-(U)$ for some $U \subseteq V$ with 
$\emptyset \neq U \subseteq T$ or $T \subseteq U \subsetneq V$. 
It is clear that the characteristic vector of a bibranching satisfies 
the following linear system in variable $x \in \RR\sp{A}$: 
\begin{alignat}{2}
\label{EQbibrpoly1}
&{}x(C) \ge 1 \quad{}&&{} \mbox{for each bicut $C$}, \\
\label{EQbibrpoly2}
&{} x(a) \ge 0 {}&&{} \mbox{for each $a \in A$}. 
\end{alignat}
Indeed, 
Schrijver \cite{Sch82} proved that the linear system \eqref{EQbibrpoly1}--\eqref{EQbibrpoly2} is box-TDI. 
\begin{theorem}[Schrijver \cite{Sch82}; see also \cite{Sch03}]
\label{THMbibrTDI}
The linear system \eqref{EQbibrpoly1}--\eqref{EQbibrpoly2} is box-TDI. 
\end{theorem}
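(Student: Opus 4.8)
The plan is to establish box-total dual integrality directly, by an uncrossing argument on optimal dual solutions followed by a total-unimodularity (network-matrix) argument. Let $\U$ denote the family of sets $U \subseteq V$ with $\emptyset \neq U \subseteq T$ or $T \subseteq U \subsetneq V$, so that the bicuts are exactly the sets $\delta^-(U)$ with $U \in \U$. Fix an integral weight $w \in \ZZ^A$ and an integral box $\ell \le x \le u$, and consider the covering program $\min\{w^\top x : \eqref{EQbibrpoly1},\ \eqref{EQbibrpoly2},\ \ell \le x \le u\}$. Its dual assigns a multiplier $y_U \ge 0$ to each bicut $\delta^-(U)$ and signed multipliers to the box constraints, with one inequality per arc $a$ requiring $\sum_{U \in \U :\, a \in \delta^-(U)} y_U$ minus the box multipliers to be at most $w(a)$. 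To prove the system box-TDI it suffices to exhibit an integral optimal dual solution for every such $w$ and every integral box.

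First I would record the arc-level submodularity of the cut operator: for every arc $a$ and all $U_1,U_2$, one has $\chi_{\delta^-(U_1 \cap U_2)}(a) + \chi_{\delta^-(U_1 \cup U_2)}(a) \le \chi_{\delta^-(U_1)}(a) + \chi_{\delta^-(U_2)}(a)$. Among the optimal dual solutions I would then choose (by a by-now standard uncrossing argument) one maximizing the potential $\sum_U y_U |U|^2$, and claim that its support is cross-free. Indeed, if two crossing sets $U_1,U_2$ both carried positive weight, shifting $\varepsilon = \min\{y_{U_1}, y_{U_2}\}$ from them onto $U_1 \cap U_2$ and $U_1 \cup U_2$ leaves the dual objective unchanged (each bicut has right-hand side $1$), preserves feasibility by the submodular inequality above, and strictly increases the potential because $|U|^2$ is strictly supermodular on a genuinely crossing pair — a contradiction. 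The step that makes this legitimate is the two-sided structure of $\U$: a set inside $T$ and a set containing $T$ are automatically nested, so genuinely crossing sets share a type, and then $U_1 \cap U_2$ and $U_1 \cup U_2$ again lie in $\U$ (the meet stays nonempty for sets inside $T$, the join stays proper for sets containing $T$). Hence the support of the chosen solution is a cross-free family $\F \subseteq \U$.

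To finish, I would invoke that the incidence matrix $M$ whose rows are indexed by arcs and whose columns are the directed cuts $\{\delta^-(U) : U \in \F\}$ of a cross-free family is a network matrix, hence totally unimodular; appending the signed identity columns arising from the box constraints preserves total unimodularity. Restricting the dual to $\F$ therefore yields a linear program whose constraint matrix is totally unimodular and whose right-hand side $w$ is integral, so its optimum is attained at an integral vertex. Extending this vertex by zeros on $\U \setminus \F$ produces an integral optimal solution of the full dual, establishing TDI for each integral box, i.e.\ box-TDI.

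I expect the principal obstacle to be the total-unimodularity step. One must verify that a cross-free family mixing sets of the two types $\emptyset \neq U \subseteq T$ and $T \subseteq U \subsetneq V$ really induces a \emph{network} matrix for the directed operator $\delta^-$, rather than merely a cross-free family of undirected cuts; this requires a consistent tree representation together with an orientation, where the standing assumption that $D$ has no arc from $T$ to $S$ ensures that every arc occurring in a bicut points coherently (arcs from $T$ to $S$ lie in no $\delta^-(U)$ with $U \in \U$ and may be discarded). A secondary point needing the full case analysis sketched above is the claim that uncrossing never leaves $\U$.
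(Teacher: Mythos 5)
Your overall strategy --- uncross an optimal dual solution until its support is cross-free, then apply the network-matrix/total-unimodularity theorem for cross-free families --- is exactly the route the paper takes: Theorem~\ref{THMbibrTDI} itself is quoted from Schrijver, but the paper's proof of the generalization (Theorem~\ref{THMbbbtdi}, via Claim~\ref{CLcrossfree}) is precisely this argument, and your observations that genuinely crossing sets in $\U$ must share a type and that arcs from $T$ to $S$ lie in no bicut match the paper's setup. However, one step fails as written: you select an optimal dual solution \emph{maximizing} $\sum_U y_U|U|^2$, and such a maximizer need not exist. Once the box constraints are dualized, the optimal face of the dual can be unbounded in the $y$-directions: if some $U \in \U$ has $u(\delta^-(U)) = 1$ (say a bicut consisting of a single arc with box upper bound $1$), then increasing $y_U$ by $t$ and simultaneously increasing the multiplier of each constraint $x(a) \le u(a)$, $a \in \delta^-(U)$, by $t$ preserves dual feasibility (the net change in each arc inequality is $t - t = 0$) and leaves the dual objective unchanged (the change is $t - t\,u(\delta^-(U)) = 0$), while $\sum_U y_U|U|^2 \to +\infty$ along this ray. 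This is a genuine gap rather than a cosmetic one, because the uncrossing argument needs the extremal solution to exist: iteratively uncrossing without a potential has no termination guarantee, since resolving one crossing pair can create new crossings with other members of the support.

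The gap is local and there are two standard repairs, both visible in the paper. First, replace your potential by $\sum_U y_U \,|U|\cdot|V\setminus U|$ and take a \emph{minimizer}: this is a nonnegative linear functional, hence bounded below on the nonempty optimal face, so its minimum is attained; and it strictly decreases when a genuinely crossing pair is uncrossed, giving the same contradiction. Second, follow the paper's architecture exactly: uncross the dual of the system \emph{without} box constraints (there, whenever the primal optimum is finite, every $y_U$ is bounded by $w(a)$ for any $a \in \delta^-(U)$, so attainment is immediate), show the support rows form a network matrix and hence a totally unimodular submatrix, and then invoke the general criterion \cite[Theorem 5.35]{Sch03} that upgrades this property to box-TDI; this also spares you the (correct, but extra) verification that appending signed identity columns preserves total unimodularity. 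With either repair, the remaining steps of your proof --- pointwise submodularity of $a \mapsto \chi_{\delta^-(U)}(a)$, closure of $\U$ under uncrossing within each type, the network-matrix theorem \cite[Theorem 54.8]{Sch03}, and extending the integral optimum of the restricted dual by zeros --- are all sound.
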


\subsubsection{Packing disjoint bibranchings and supermodular coloring}

A theorem on packing disjoint bibranchings is also due to Schrijver \cite{Sch82}. 

\begin{theorem}[Schrijver \cite{Sch82}]
\label{THMbibrpacking}
Let $D=(V,A)$ be a digraph and 
$\{S,T\}$ be a partition of $V$, where $S,T \neq \emptyset$. 
Then, 
the maximum number of disjoint bibranchings in $D$ is equal to 
the minimum size of a bicut. 
\end{theorem}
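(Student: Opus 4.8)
The plan is to prove the two inequalities separately, after recasting the definition of a bibranching in cut form. By a standard reachability argument, properties \eqref{EQbibrdef1} and \eqref{EQbibrdef2} are equivalent to the statement that $\delta^-_B(U)\neq\emptyset$ for every $U$ with $\emptyset\neq U\subseteq T$ and that $\delta^+_B(W)\neq\emptyset$ for every $\emptyset\neq W\subseteq S$; writing $U=V\setminus W\supseteq T$ in the latter, both conditions say precisely that $B$ meets every bicut. Hence a bibranching is exactly an arc set that intersects every bicut. The easy inequality is then immediate: if $B_1,\dots,B_k$ are disjoint bibranchings, then for any bicut $C$ we have $|C|\ge\sum_j|B_j\cap C|\ge k$, so the minimum size of a bicut is at least the maximum number of disjoint bibranchings.

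For the reverse inequality, let $k$ be the minimum size of a bicut; I want to exhibit $k$ disjoint bibranchings. By the reformulation above this amounts to coloring the arcs with $k$ colors so that every color class meets every bicut, disjointness then being automatic. I would set up the family $\F=\{U\subseteq V:\emptyset\neq U\subseteq T\text{ or }T\subseteq U\subsetneq V\}$, so that the bicuts are exactly the sets $\delta^-(U)$ with $U\in\F$, together with the constant demand $g\equiv1$ on $\F$. One checks that $\F$ is a crossing family (the only nontrivial crossing pairs are two $T$-sets or two $T$-containing sets, while a $T$-set is contained in every $T$-containing set) and that $g$ is crossing supermodular. The hypothesis that every bicut has at least $k$ arcs reads $d^-_A(U)\ge k=k\,g(U)$ for all $U\in\F$, which is exactly the supply condition needed to split the demand into $k$ unit demands.

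With this data in place, the supermodular coloring theorem yields a partition of $A$ into classes $B_1,\dots,B_k$ with $d^-_{B_j}(U)\ge g(U)=1$ for every $U\in\F$ and every $j$; that is, each $B_j$ meets every bicut and is therefore a bibranching. Since the $B_j$ are disjoint, this gives $k$ disjoint bibranchings and establishes the harder inequality. (Alternatively one could argue polyhedrally: by Theorem \ref{THMbibrTDI} the minimum-bicut hypothesis places $\tfrac1k\mathbf{1}$ in the bibranching polytope, so $\mathbf{1}$ lies in $k$ times that polytope, and an integer-decomposition argument splits $\mathbf{1}$ into $k$ integral bibranching vectors.)

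The step I expect to be the main obstacle is the reduction to supermodular coloring, specifically the bookkeeping that unifies the two types of bicut into a single crossing family. The two requirement types are genuinely coupled through the arcs of $A[S,T]$, each of which can simultaneously enter a $T$-set and leave an $S$-set, so one must verify carefully that $\F$ is closed under crossing unions and intersections and that the combined demand is supermodular across the \emph{mixed} region where a $T$-set meets a $T$-containing set. Matching the hypotheses of the supermodular coloring theorem---crossing family, crossing supermodular demand, and supply at least $k$ times the demand---to the min-bicut condition is where the real content lies; once that framework is correctly assembled, the coloring theorem does the rest.
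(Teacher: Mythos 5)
Your opening reformulation---a bibranching is exactly an arc set meeting every bicut---and the resulting easy inequality are correct. The gap is in the reduction to the supermodular coloring theorem, and it is fatal as written. First, you have misread the conclusion of Theorem \ref{THMsupcol}: it guarantees that the \emph{number of color classes meeting} each member set $C$ is at least $g_i(C)$; it does not say that every class covers the demand. With your demand $g\equiv 1$ the conclusion is that each bicut is met by at least one class, which is true of any partition whatsoever and gives nothing; to force every class to meet every bicut you would need $g\equiv k$ (and then the hypothesis $g(C)\le\min\{k,|C|\}$ does become the min-bicut condition). That part is repairable. What is not repairable is the choice of set family. Theorem \ref{THMsupcol} requires \emph{intersecting families of subsets of the ground set being colored}, with demands supermodular with respect to union and intersection of those subsets. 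The bicuts, viewed as subsets of $A$, do not form such a family: for crossing $U_1,U_2\subseteq T$, an arc with tail in $U_1\setminus U_2$ and head in $U_1\cap U_2$ lies in $\delta^-_A(U_1\cap U_2)$ but not in $\delta^-_A(U_1)\cap\delta^-_A(U_2)$, and that intersection (arcs entering $U_1\cap U_2$ from outside $U_1\cup U_2$) is in general not a bicut at all. Your crossing family $\F$ of \emph{vertex} sets with a ``crossing supermodular'' demand on cuts is simply not the kind of object the theorem speaks about, and no crossing-family version is available to quote; a statement of that generality (partition $A$ into $k$ classes each covering a crossing supermodular cut demand, given supply $d_A^-\ge k\,g$) would contain Theorem \ref{THMbibrpacking} itself, and Edmonds' theorem, as immediate special cases---it is the content to be proven, not a tool.

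This is precisely why Schrijver's proof, and the paper's proof of the generalization (Theorem \ref{THMbbbpacking}, which specializes to this statement at $b\equiv 1$), colors only the crossing arcs $H=A[S,T]$. For arcs with heads in $T$, membership in $\delta_H^-(U)$ depends only on the head, so by \eqref{EQpartition} the two cut families $\C_1=\{\delta_H^-(U)\colon\emptyset\neq U\subseteq T\}$ and $\C_2=\{\delta_H^+(W)\colon\emptyset\neq W\subseteq S\}$ genuinely are intersecting families over $H$. The demands are then not constant but discounted by the arcs that were excluded from the coloring, $g_1(C)=\max\{k-d^-_{A[T]}(U)\colon \delta_H^-(U)=C\}$ as in \eqref{EQg1} (symmetrically for $g_2$), and after the coloring of $H$ each color class $H_j$ is completed to a bibranching by applying Edmonds' disjoint branchings theorem (Theorem \ref{THMbrpacking}) inside $D[T]$ with root sets determined by $H_j$, and its reversal inside $D[S]$. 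This second ingredient---Edmonds' theorem handling $A[T]$ and $A[S]$---is entirely absent from your proposal and cannot be dispensed with. Finally, your parenthetical alternative is circular: box-TDI (Theorem \ref{THMbibrTDI}) does not imply the integer decomposition property; in this line of work the integer decomposition property of the (bi)branching polytopes is \emph{derived from} the packing theorem you are trying to prove, not the other way around.
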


Note that Theorem \ref{THMbibrpacking} is an extension of 
a special case of Theorem \ref{THMbrpacking}, 
where $R_j=\{s\}$ ($j \in [k]$) for a specified vertex $s \in V$. 

Schrijver \cite{Sch85} presented a proof for Theorem \ref{THMbibrpacking} using the 
\emph{supermodular coloring theorem}, 
which is described as follows.  
For recent progress on supermodular coloring, 
the readers are referred to 
\cite{IY17,Yok18}.

Let $H$ be a finite set. 
A set family $\mathcal{C} \subseteq 2^H$ is an \emph{intersecting family} if, 
for all $X,Y \in \C$ with $X \cap Y \neq \emptyset$, 
it holds that 
$X\cup Y, X\cap Y \in \C$. 
For an intersecting family $\C \subseteq 2^H$, 
a function $g:\C \to \RR$ is called \emph{intersecting supermodular} if 
$g(X) + g(Y) \le g(X \cup Y) + g(X \cap Y)$ holds for all $X,Y \in \C$ with $X \cap Y \neq \emptyset$. 
A function $f:\C \to \RR$ is called \emph{intersecting submodular} if 
$-f$ is supermodular.

\begin{theorem}[Schrijver \cite{Sch85}]
\label{THMsupcol}
Let $\C_1,\C_2 \subseteq 2^H$ be intersecting families, 
$g_1\colon \C_1 \to \RR$ and 
$g_2\colon \C_2 \to \RR$ be intersecting supermodular functions, 
and $k$ be a positive integer.  
Then, 
$H$ can be partitioned into 
$k$ classes $H_1,\ldots,H_k$ 
such that 
$g_i(C) \le |\{j\in [k] \colon H_j \cap C \neq \emptyset\}|$ 
for each $C \in \C_i$ and $i=1,2$ 
if and only if 
$g_i(C) \le \min \{ k, |C| \}$ for each $i=1,2$ and each $C \in \C_i$. 
\end{theorem}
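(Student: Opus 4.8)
The plan is to prove the easy \emph{necessity} direction directly and to establish \emph{sufficiency} by induction on $k$, reducing the construction of a full $k$-coloring to the selection of a single color class, for which I would invoke generalized polymatroid intersection in the spirit of Tardos. Necessity is immediate: if $H=H_1\cup\cdots\cup H_k$ realizes the coloring constraints, then for every $C\in\C_i$ the set $\{j\in[k]\colon H_j\cap C\neq\emptyset\}$ has at most $k$ elements and, since the $H_j$ are disjoint, at most $|C|$ elements, whence $g_i(C)\le\min\{k,|C|\}$.

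For sufficiency I would induct on $k$, the case $k=1$ being trivial (take $H_1=H$, using $g_i(C)\le 1$). For the inductive step the crux is to choose a class $H_k\subseteq H$, to be colored $k$, so that the residual instance on $H'=H\setminus H_k$ with $k-1$ colors still satisfies the hypothesis. Counting colors on a set $C\in\C_i$ shows that it suffices for $H_k$ to (A) meet every $g_i$-\emph{tight} set (those with $g_i(C)=k$) for $i=1,2$, and (B) satisfy $|H_k\cap C|\le |C|-g_i(C)+1$ for all $C\in\C_i$ and $i=1,2$. Writing $\mathbf{1}(P)$ for $1$ if $P$ holds and $0$ otherwise, these are exactly the conditions under which the reduced functions $g_i'(C')=\max\{g_i(C)-\mathbf{1}(C\cap H_k\neq\emptyset)\colon C\in\C_i,\ C\cap H'=C'\}$ on the restricted intersecting families $\C_i'=\{C\cap H'\colon C\in\C_i\}\setminus\{\emptyset\}$ obey $g_i'(C')\le\min\{k-1,|C'|\}$ and remain intersecting supermodular. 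The latter is a short computation: for fiber-maximizers $C_1,C_2$ with $C_1\cap C_2\neq\emptyset$, supermodularity of $g_i$ combined with the elementary inequality $\mathbf{1}((C_1\cup C_2)\cap H_k\neq\emptyset)+\mathbf{1}((C_1\cap C_2)\cap H_k\neq\emptyset)\le\mathbf{1}(C_1\cap H_k\neq\emptyset)+\mathbf{1}(C_2\cap H_k\neq\emptyset)$ yields the required superadditivity, so the induction hypothesis applies to $H'$.

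Everything thus reduces to producing a $0$--$1$ vector $x=\chi_{H_k}$ satisfying, for $i=1,2$, the lower bounds $x(C)\ge 1$ on the $g_i$-tight sets, the upper bounds $x(C)\le |C|-g_i(C)+1$ on all $C\in\C_i$, and $0\le x\le 1$. Here I would use generalized polymatroids: for each $i$, the function equal to $1$ on the (intersecting) subfamily of $g_i$-tight sets is intersecting supermodular, while $C\mapsto |C|-g_i(C)+1$ is intersecting submodular, so after the standard extension of an intersecting pair to a paramodular pair (Dilworth truncation) this system describes an integral g-polymatroid $Q_i$. The desired $H_k$ is then an integral point of $Q_1\cap Q_2\cap[0,1]^H$, which exists and is integral by Frank's g-polymatroid intersection theorem, provided the cross feasibility conditions hold; these collapse, after the standard manipulations, exactly to the hypotheses $g_i(C)\le|C|$ and $g_i(C)\le k$.

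The main obstacle is precisely this last step: establishing the existence of a single class $H_k$ that handles the families $\C_1$ and $\C_2$ \emph{simultaneously}. A single intersecting supermodular requirement would admit a direct transversal/greedy argument, but the interaction of two such requirements is what forces the polymatroid machinery. Concretely, the two delicate verifications are that the intersecting-family inequalities genuinely describe integral g-polymatroids (paramodularity of each pair $(p_i,b_i)$, using $g_i(C)\le k$), and that the cross inequalities needed for a nonempty integral intersection reduce to $g_i(C)\le\min\{k,|C|\}$. This is where Tardos' generalized-polymatroid technique carries the proof.
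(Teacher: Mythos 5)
Your proposal follows the same route as the paper's own technique. The paper does not actually prove Theorem \ref{THMsupcol} (it is quoted from Schrijver), but its proof of Theorem \ref{THMbbbpacking} is precisely an extension of Tardos' generalized-polymatroid proof of this theorem, and your plan mirrors that argument: peel off one color class, obtained as an integral point of the intersection of two generalized polymatroids determined by the constraints $x(C)\le |C|-g_i(C)+1$ for $C\in\C_i$, $x(C)\ge 1$ on the $g_i$-tight sets, and $0\le x\le 1$, then recurse on $k$. Your necessity argument, your reduction conditions (A) and (B), and the verification that the residual functions $g_i'$ remain intersecting supermodular on intersecting families and satisfy $g_i'(C')\le\min\{k-1,|C'|\}$ are all correct.

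The one step that does not hold up as written is the existence of $H_k$. You invoke Frank's g-polymatroid intersection theorem and assert that its cross feasibility conditions ``collapse, after the standard manipulations, exactly to the hypotheses.'' That assertion is doing the real work of the proof and is left unverified: for intersecting (rather than paramodular) pairs, the induced paramodular pair is given by a Dilworth-truncation-type formula involving minima over partitions, so the cross inequalities are not in any direct way the inequalities $g_i(C)\le\min\{k,|C|\}$, and reducing the former to the latter is not a routine manipulation. There is a cleaner, fully elementary certificate of nonemptiness, which is exactly what the paper uses in its extended setting (Claim \ref{CLoverk}): the uniform fractional vector $\bm{1}^H/k$ lies in both polytopes. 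Indeed, $x^*(C)=|C|/k\le |C|-g_i(C)+1$ follows from $g_i(C)\le |C|$ and $g_i(C)\le k$, and for tight sets $g_i(C)=k\le |C|$ gives $x^*(C)\ge 1$. Nonemptiness together with the integrality of the intersection of two integral generalized polymatroids (Theorem \ref{THgp}) then yields the desired $0$--$1$ point, with no cross conditions ever checked. With that substitution, and with the fact that each constraint system determines an integral generalized polymatroid cited from the literature (Schrijver's Theorem 49.14, as the paper does in Claim \ref{CLgp}), your proof is complete and coincides with the paper's method.
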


In Section \ref{SECbbbpacking}, 
we prove a theorem on packing disjoint $b$-bibranchings (Theorem \ref{THMbbbpacking}), 
which extends Theorem \ref{THMbibrpacking}, 
by extending Tardos' proof for Theorem \ref{THMsupcol} using generalized polymatroids \cite{Fra84}. 
A generalized polymatroid is a polyhedron defined 
by an intersecting supermodular function 
and an intersecting submodular function with a certain property. 
Here we omit the definition, 
but 
show basic properties of generalized polymatroids 
used in the subsequent sections. 

Let $P \subseteq \RR^H$ be a polyhedron. 
Let $a_1,a_2\in H$ and $\tilde{a}$ be an element not belonging to $H$. 
Denote $\tilde{H} = (H \setminus \{a_1,a_2\}) \cup \{\tilde{a}\}$. 
The \emph{aggregation} of $P$ at $a_1,a_2\in H$ is a polyhedron $\tilde{P} \in \RR^{\tilde{V}}$ defined by 
\begin{align*}
\tilde{P} = \{ (x_0, x(a_1)+x(a_2)) \colon (x_0, x(a_1), x(a_2)) \in P \}, 
\end{align*}
where $x_0 \in \RR^{H\setminus\{a_1,a_2\}}$. 
Let $a \in H$, 
and $a',a''$ be elements not belonging to $H$. 
Denote $H'= (H \setminus \{a\}) \cup \{a',a''\}$. 
The \emph{splitting} of $P$ at $a \in H$ is a polyhedron $P'\subseteq \RR^{H'}$ 
defined by 
\begin{align*}
P' = \{ (x_0; x(a'), x(a'')) \colon (x_0;x(a')+x(a'')) \in P  \}, 
\end{align*}
where $x_0 \in \RR^{H\setminus\{a\}}$. 

\begin{theorem}[See \cite{FT88,Fuj05,Mur03,Sch03}]
\label{THgp}
Generalized polymatroids have the following properties. 
\begin{enumerate}
\item 
\label{ENUinteger}
A generalized polymatroid is integer if and only if 
it is determined by a pair of an intersecting submodular function and an intersecting supermodular function 
which are integer. 
\item 
\label{ENUintersection}
The intersection of two integer generalized polymatroids is an integer polyhedron. 
\item
\label{ENUclosed}
Generalized polymatroids are closed under the operations of 
splitting, 
aggregation, 
and
intersection with a box. 
\end{enumerate}
\end{theorem}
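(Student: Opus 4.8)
The plan is to prove the three listed properties of generalized polymatroids by reducing each to the defining intersecting (sub/super)modular pair. Since the excerpt explicitly declines to give the definition, I would first recall it: a generalized polymatroid is the set of $x \in \RR^H$ satisfying $x(X) \le f(X)$ for all $X$ in the domain of an intersecting submodular $f$ and $x(Y) \ge g(Y)$ for all $Y$ in the domain of an intersecting supermodular $g$, where $(f,g)$ form a \emph{paramodular} (compliant) pair, meaning $f(X) - g(Y) \ge f(X \setminus Y) - g(Y \setminus X)$ whenever $X,Y$ lie in the respective intersecting families and the relevant sets are nonempty. With this in hand, property \ref{ENUinteger} is essentially Frank's characterization: a paramodular pair with integer-valued $f,g$ defines a box-TDI system, hence an integral polyhedron, and conversely any integral g-polymatroid is induced by its own (integer) support functions. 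I would cite \cite{FT88,Fuj05} for this and indicate that integrality follows from the greedy-type argument showing the defining system is TDI.

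For property \ref{ENUintersection}, the key fact is that the intersection of two integer generalized polymatroids is an integer polyhedron. Here I would invoke the g-polymatroid intersection theorem (Frank): if $P_1 = Q(f_1,g_1)$ and $P_2 = Q(f_2,g_2)$ are integral g-polymatroids, then every vertex of $P_1 \cap P_2$ is integral, and the intersection is described by a box-TDI system. The cleanest route is to note that this is the g-polymatroid analogue of the polymatroid intersection theorem, and its proof proceeds by showing that the minimum of $f_1 + f_2$ (with the appropriate $g$-terms) over set pairs is attained with an integrality-preserving uncrossing argument. I would state that the proof uses the standard uncrossing of the tight sets, relying on the intersecting submodularity of $f_1,f_2$ and supermodularity of $g_1,g_2$, and cite \cite{Fuj05,Sch03} rather than reproduce it.

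Property \ref{ENUclosed} is the most hands-on and is where I would spend the real work. Closure under intersection with a box follows immediately because a box $\{\ell \le x \le u\}$ is itself a g-polymatroid (take $f(X) = u(X)$, $g(X) = \ell(X)$ on singletons), so this reduces to property \ref{ENUintersection}; alternatively one checks directly that adjoining box constraints to a paramodular system keeps it paramodular after truncation. For splitting at $a$ into $a', a''$, I would exhibit the new defining pair explicitly: the submodular function on $H'$ is obtained by the convolution $f'(X') = \min\{ f(X) : X \text{ obtained from } X' \text{ by merging } a',a'' \mapsto a \}$ together with the singleton bounds $x(a') \le f(a')$, and symmetrically for $g'$; one then verifies the paramodular inequalities survive the operation. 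Aggregation is the formal dual/inverse of splitting, so I would reduce it to the splitting case or verify it by the analogous convolution formula.

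The main obstacle I anticipate is the bookkeeping in \ref{ENUclosed}: verifying that the convolved functions produced by splitting (and the merged functions produced by aggregation) remain a \emph{paramodular} pair on the new ground set, i.e.\ that intersecting submodularity, intersecting supermodularity, and the compliance inequality are all preserved. This is a routine but delicate case analysis on how merged/split coordinates interact with an arbitrary set $X$, and it is the step most likely to require care rather than citation. Since all three statements are classical and collected here only as tools, my overall strategy is to give precise pointers to \cite{FT88,Fuj05,Mur03,Sch03} for \ref{ENUinteger} and \ref{ENUintersection}, and to sketch the explicit function constructions for \ref{ENUclosed}, flagging that the preservation of paramodularity under splitting is the one computation worth writing out.
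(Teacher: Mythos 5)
This theorem is stated in the paper purely as background, with no proof at all: it is cited to \cite{FT88,Fuj05,Mur03,Sch03}, so there is no in-paper argument to compare against, and your overall plan --- recall the paramodular-pair definition, give precise pointers for \ref{ENUinteger} and \ref{ENUintersection}, and sketch the constructions for \ref{ENUclosed} --- is consistent with how the paper treats the result. Your statement of the cross-inequality and your citations for the integrality characterization and for g-polymatroid intersection are sound.

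There is, however, one genuine error in your treatment of \ref{ENUclosed}. You claim that closure under intersection with a box ``follows immediately'' because a box is itself a g-polymatroid, ``so this reduces to property \ref{ENUintersection}.'' That reduction does not work: property \ref{ENUintersection} asserts only that the intersection of two integer g-polymatroids is an integer \emph{polyhedron}; it does not assert that the intersection is again a g-polymatroid, and in general it is not --- g-polymatroids are not closed under pairwise intersection (if they were, \ref{ENUintersection} would be trivial, and matroid intersection would collapse to a single matroid). What \ref{ENUclosed} asserts is the special fact that intersecting with a \emph{box} preserves the g-polymatroid property, and this must be proved directly; your fallback argument --- the Frank--Tardos truncation showing that adjoining box constraints yields a new paramodular pair --- is the correct route and should be the primary one. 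A second, smaller slip: in your splitting construction you impose ``singleton bounds $x(a') \le f(a')$,'' but under the paper's definition of splitting the individual split coordinates are unconstrained (only their sum is restricted), so such bounds would cut the polyhedron down strictly. The correct defining pair for the split polyhedron assigns finite values only to sets containing both of $a',a''$ or neither (taking the value of $f$, resp.\ $g$, on the merged set), and paramodularity is then checked on that restricted intersecting family.
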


\subsubsection{M-convex submodular flow formulation}

We finally review the $\mathrm{M}^\natural$-convex submodular flow formulation for the shortest bibranching problem \cite{Tak12bibr}. 
We begin some definitions. 
Let $\overline{\ZZ}$ denote $\ZZ \cup \{ + \infty\}$. 
Let $V$ be a finite set. 
For a vector $x \in \RR^V$, 
define 
$\suppp(x) = \{v \in V \colon x(v)>0\}$ 
and 
$\suppm(x) = \{v \in V \colon x(v)<0\}$.  
For $v \in V$, 
let $\chi_v$ denote a vector in $\ZZ^V$ defined by 
$\chi_v(v) =1$ and 
$\chi_v(v') = 0$ for each $v' \in V \setminus \{v\}$. 
A function $f\colon \ZZ^V \to \overline{\ZZ}$ is an \emph{$\mathrm{M}^\natural$-convex function} \cite{Mur03,MS99,MS18} if 
it satisfies the following property: 
\begin{quote}
For each $x,y \in \ZZ^V$ and $u \in \suppp(x-y)$, 
\begin{align}
\label{EQM1}
f(x) + f(y) \ge f(x - \chi_u) + f(y + \chi_u), 
\end{align}
or there exists $v \in \suppm(x-y)$ such that 
\begin{align}
\label{EQM2}
f(x) + f(y) \ge f(x - \chi_u+\chi_v) + f(y + \chi_u-\chi_v).
\end{align}
\end{quote}
The effective demain $\dom f$ of $f$ is defined by 
$\dom f = \{x \in \ZZ^V \colon f(x) < + \infty\}$. 

Let $D=(V,A)$ be a digraph and 
$w \in \RR^A_+$ represent the arc weights. 
Let $\underline{c},\overline{c} \in \RR^A$ be vectors on $A$ such that 
$\underline{c}(a) \le \overline{c}(a)$ for each $a \in A$. 
For $\xi \in \RR\sp{A}$, 
define $\partial^+\xi,\partial^-\xi,\partial \xi \in \RR\sp{V}$ by 
\begin{align*}
&{}\partial^+ \xi (v) = 
\sum_{a \in \delta^+(v)}\xi(a) \quad (v \in V), \\
&{}\partial^- \xi (v) = 
\sum_{a \in \delta^-(v)}\xi(a) \quad (v \in V), \\
&{}\partial \xi (v) = 
\partial^+\xi(v) - \partial^-\xi(v) \quad(v \in V). 
\end{align*}
Let $f: \ZZ^V \to \overline{\ZZ}$ be an $\mathrm{M}^\natural$-convex function. 
Now the following problem 
in variable $\xi \in \ZZ^A$ 
is called the \emph{$\mathrm{M}^\natural$-convex submodular flow problem} \cite{Mur99}:
\begin{alignat}{2}
&{}\mbox{minimize} \quad 	{}&&{}\sum_{a \in A}w(a)\xi(a) + f(\partial \xi) \\
&{}\mbox{subject to} \quad 	{}&&{}
\underline{c}(a) \le \xi(a) \le \overline{c}(a) \quad \mbox{for each $a \in A$}, \\
&{} {}&&{}\partial\xi \in \dom f.
\end{alignat}

The $\mathrm{M}^\natural$-convex submodular flow formulation for the 
shortest bibranching problem \cite{Tak12bibr} is obtained as follows. 
Define a function $f_T \colon \ZZ^T \to \overline{\ZZ}$  
in the following manner. 
First, 
the effective domain $\dom f_T$ is defined by 
\begin{align}
\label{EQdomfT}
\dom f_T = \{x \in \ZZ_+^T \colon \mbox{$D[T]$ has a branching $B$ with $x \ge \chi_{R(B)}$}\}. 
\end{align}
Then, 
for $x \in \ZZ^T$, 
the function value $f_T(x)$ is defined by 
\begin{align}
\label{EQvalfT}
f_T(x) = 
\begin{cases}
\min\{ w(B) \colon \mbox{$B$ is a branching in $D[T]$, $x \ge \chi_{R(B)}$} \} & (x \in \dom f_T), \\
+\infty & (x \not \in \dom f_T). 
\end{cases}
\end{align}
Similarly, 
define a function $f_S \colon \ZZ^S \to \overline{\ZZ}$ by 
\begin{align*}
&{}\dom f_S = \{x \in \ZZ_+^S \colon \mbox{$D[S]$ has a cobranching $B$ with $x \ge \chi_{R^*(B)}$}\}, \\
&{}f_S(x) = 
\begin{cases}
\min\{ w(B^*) \colon \mbox{$B^*$ is a cobranching in $D[S]$, $x \ge \chi_{R^*(B^*)}$} \} & (x \in \dom f_S), \\
+\infty & (x \not \in \dom f_S). 
\end{cases}
\end{align*}
Now the $\mathrm{M}^\natural$-convexity of $f_T$ and $f_S$ is derived from the exchange property of branchings \cite{Sch00}. 
\begin{theorem}[\cite{Tak12bibr}, see also \cite{MT17}]
\label{THMMconv}
The functions $f_T$ and $f_S$ are $\mathrm{M}^\natural$-convex. 
\end{theorem}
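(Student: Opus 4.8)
The plan is to prove M$^\natural$-convexity for $f_T$ only, since $f_S$ reduces to it: reversing every arc of $D[S]$ turns a cobranching into a branching and identifies $R^{*}$ with $R$, so $f_S$ is a function of the type \eqref{EQvalfT} for the reversed digraph on $S$, and its M$^\natural$-convexity is a special case of that of $f_T$. To verify the M$^\natural$-exchange axiom \eqref{EQM1}--\eqref{EQM2} for $f_T$, I fix $x,y \in \ZZ^T$ and $u \in \suppp(x-y)$; if $x$ or $y$ lies outside $\dom f_T$ the left-hand side is $+\infty$ and there is nothing to prove, so I assume $x,y \in \dom f_T$, whence $x,y \ge 0$ and $x(u) > y(u) \ge 0$. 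I fix branchings $B_x,B_y$ in $D[T]$ attaining $w(B_x)=f_T(x)$ and $w(B_y)=f_T(y)$, so that $R(B_x) \subseteq \suppp(x)$ and $R(B_y) \subseteq \suppp(y)$.

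The key preliminary observation is that, because $\chi_{R(B)}$ is a $0/1$ vector, the constraint $x \ge \chi_{R(B)}$ is equivalent to $R(B) \subseteq \suppp(x)$; hence $f_T$ depends on $x \ge 0$ only through $\suppp(x)$ and is antitone under support inclusion. This disposes of all but one case by \eqref{EQM1}: if $x(u) \ge 2$ then $\suppp(x-\chi_u)=\suppp(x)$ and $\suppp(y+\chi_u) \supseteq \suppp(y)$, so $f_T(x-\chi_u)+f_T(y+\chi_u) \le f_T(x)+f_T(y)$; and if $x(u)=1$ but $u \notin R(B_x)$, then $R(B_x) \subseteq \suppp(x)\setminus\{u\}=\suppp(x-\chi_u)$, so $B_x$ already witnesses $f_T(x-\chi_u) \le w(B_x)$ while $B_y$ witnesses $f_T(y+\chi_u) \le w(B_y)$. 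Thus I may assume $x(u)=1$, $y(u)=0$, and $u \in R(B_x)$.

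In this remaining case $u$ is a root of $B_x$ but, since $u \notin \suppp(y) \supseteq R(B_y)$, has an entering arc in $B_y$. Here I invoke the exchange property of shortest branchings \cite{Sch00} (the discrete convexity of shortest branchings, \cite{Tak14}), which, applied to $B_x,B_y$ at $u \in R(B_x)\setminus R(B_y)$, produces branchings $B_x',B_y'$ in $D[T]$ and a vertex $v \in R(B_y)$ with $w(B_x')+w(B_y') \le w(B_x)+w(B_y)$, $R(B_x') \subseteq (R(B_x)\cup\{v\})\setminus\{u\}$, and $R(B_y') \subseteq (R(B_y)\cup\{u\})\setminus\{v\}$. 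Since $v \in R(B_y) \subseteq \suppp(y)$ we have $y(v) \ge 1$, and $v \ne u$ because $u \notin R(B_y)$. The argument now splits on whether $v \in \suppp(x)$. If $v \in \suppp(x)$, then $R(B_x') \subseteq \suppp(x)\setminus\{u\}=\suppp(x-\chi_u)$ and $R(B_y') \subseteq \suppp(y)\cup\{u\}=\suppp(y+\chi_u)$, so $B_x',B_y'$ certify \eqref{EQM1}. If $v \notin \suppp(x)$, i.e.\ $x(v)=0$, then $v \in \suppm(x-y)$, and a short check of supports gives $R(B_x') \subseteq \suppp(x-\chi_u+\chi_v)$ and $R(B_y') \subseteq \suppp(y+\chi_u-\chi_v)$, so $B_x',B_y'$ certify \eqref{EQM2} for this $v$. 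In both cases the weight bound yields the desired inequality.

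The main obstacle is the invoked exchange property itself: producing, for a pair of shortest branchings and a designated root $u$ of one of them, a pair of branchings whose total weight does not increase and whose root sets differ from the originals only by moving $u$ and a single compensating root $v$ of the other branching. This weighted, root-controlled exchange --- resting ultimately on Edmonds' disjoint branching theorem (Theorem \ref{THMbrpacking}) --- is the genuine content; the surrounding case analysis is routine support bookkeeping, and the naive arc-exchange (simply adding the arc of $B_y$ entering $u$ to $B_x$) fails precisely because it need not respect the weight bound.
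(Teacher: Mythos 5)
Your proof is correct and takes essentially the same route as the paper and its cited source \cite{Tak12bibr}: after the slack cases are dispatched by the antitonicity of $f_T$ in the support of its argument, everything is reduced to the root-exchange property of branchings from Schrijver \cite{Sch00}, which is precisely the ingredient the paper itself invokes (and generalizes to $b$-branchings as Lemma~\ref{LEMexc}, used to prove Lemma~\ref{LEMf} and Theorem~\ref{THMg}). Two harmless inaccuracies: the exchange property actually has two outcomes---either $u$ alone changes sides, or $u$ is swapped with a compensating root $v \in R(B_y)$---so your always-$v$ formulation omits the first outcome, which is however handled verbatim by your ``$v \in \suppp(x)$'' computation; and the naive swap of the $B_y$-arc entering $u$ into $B_x$ in fact preserves total weight exactly---what can fail is acyclicity of $B_x$ plus that arc, which is why the full repartitioning of $B_x \cup B_y$ (resting on Edmonds' theorem) is needed.
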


Based on the perspective on bibranchings by Murota and Takazawa \cite{MT17}, 
we can describe 
the shortest bibranching problem as 
the following nonlinear minimization problem in variable $\xi\in \ZZ\sp{A[S,T]}$: 
\begin{alignat}{2}
\label{EQbbMforma}
&{}\mbox{minimize} \quad 	{}&&{}\sum_{a \in A[S,T]}w(a)\xi(a) + f_S(\partial^+ \xi) + f_T(\partial^- \xi) \\
&{}\mbox{subject to} \quad 	{}&&{}
0 \le \xi(a) \le 1 \quad \mbox{for each $a \in A[S,T]$}, \\
&{} {}&&{}\partial^+\xi \in \dom f_S , \\
\label{EQbbMformz}
&{} {}&&{}\partial^-\xi \in \dom f_T .
\end{alignat}

For $x \in \RR^V$ and $U \subseteq V$, 
denote the restriction of $x$ to $U$ by $x|_U$. 
It directly follows from Theorem \ref{THMMconv} that 
a function $f : \ZZ^{A[S,T]} \to \overline{\ZZ}$ defined by 
$f(x)=f_S(x|_S) + f_T(x|_T)$ ($x \in \ZZ^{V}$) 
is an $\mathrm{M}^\natural$-convex function. 
Therefore, 
the minimization problem \eqref{EQbbMforma}--\eqref{EQbbMformz} is an instance of the $\mathrm{M}^\natural$-convex submodular flow problem.

\subsection{$b$-branching}
\label{SECbb}

Let $D=(V,A)$ be a digraph and $b \in \ZZ_{++}^V$ be a positive integer vector on $V$. 
An arc subset $B \subseteq A$ is a \emph{$b$-branching} \cite{KKT18} if 
\begin{alignat}{2}
\label{EQbbdeg}
&{}d_B^-(v) \le b(v) \quad {}&&{}(v \in V), \\
\label{EQbbsp}
&{}|B[X]| \le b(X) - 1 \quad {}&&{}(\emptyset \neq X \subseteq V). 
\end{alignat}
It is clear that, 
in the case $b(v)=1$ for each $v \in V$, 
a $b$-branching is exactly a branching: 
\eqref{EQbbdeg} and \eqref{EQbbsp} correspond to 
\eqref{EQbdeg} and \eqref{EQbsp}, 
respectively. 

Also, 
observe that 
\eqref{EQbbdeg} defines an independenet set family of a matroid. 
Moreover, 
\eqref{EQbbsp} as well defines a matroid, 
called a \emph{sparsity matroid} or a \emph{count matroid} (see, e.g.,\ \cite{Fra11}). 
Therefore, 
a $b$-branching is a special case of matroid intersection. 

This observation leads to the fact that  
the following linear system, 
in variable $x \in \RR^A$, determines the matroid intersection polytope and 
thus box-TDI: 
\begin{alignat}{2}
\label{EQbbpoly1}
&{}x(\delta^-(v)) \le b(v) \quad {}&&{}(v \in V), \\
\label{EQbbpoly2}
&{}x(A[X]) \le b(X)-1 \quad    {}&&{}(\emptyset \neq X \subseteq V), \\
\label{EQbbpoly3}
&{} 0\le x(a) \le 1               {}&&{}(a\in A). 
\end{alignat}
\begin{theorem}[\cite{KKT18}]
\label{THMbbTDI}
The linear system \eqref{EQbbpoly1}--\eqref{EQbbpoly3} is 
box-TDI. 
In particular, 
the linear system \eqref{EQbbpoly1}--\eqref{EQbbpoly3} determines 
the $b$-branching polytope. 

\end{theorem}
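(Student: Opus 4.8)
The plan is to read \eqref{EQbbpoly1}--\eqref{EQbbpoly3} as the standard inequality description of the matroid intersection polytope of the two matroids $M_1=(A,\I_1)$ and $M_2=(A,\I_2)$ given by \eqref{EQbbdeg}--\eqref{EQbbsp}, and to derive box-TDI from the known box-TDI of matroid intersection. Here $M_1$ is the partition matroid obtained by splitting $A$ into the blocks $\delta^-(v)$ ($v\in V$) and allowing at most $b(v)$ arcs from the block $\delta^-(v)$, with rank $r_1(U)=\sum_{v\in V}\min\{|U\cap\delta^-(v)|,\,b(v)\}$; and $M_2$ is the sparsity (count) matroid with count function $X\mapsto b(X)-1$. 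A $b$-branching is precisely a common independent set of $M_1$ and $M_2$, so Edmonds' matroid intersection polytope theorem tells us that the convex hull of $b$-branchings is cut out by $x\ge 0$ together with the rank inequalities $x(U)\le r_i(U)$ ($U\subseteq A$, $i=1,2$), and that this rank system is box-TDI.

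First I would reduce the two rank families to the compact families in \eqref{EQbbpoly1}--\eqref{EQbbpoly2}. For $M_1$ this is immediate: within $0\le x\le 1$ we have, for every $U\subseteq A$, $x(U)=\sum_v x(U\cap\delta^-(v))\le\sum_v\min\{|U\cap\delta^-(v)|,\,b(v)\}=r_1(U)$, since $x(a)\le 1$ gives $x(U\cap\delta^-(v))\le|U\cap\delta^-(v)|$ and \eqref{EQbbpoly1} with $x\ge 0$ gives $x(U\cap\delta^-(v))\le b(v)$. For $M_2$ I would use the standard formula for the rank of a count matroid: $r_2(U)$ is the minimum of $\sum_i(b(X_i)-1)$ over coverings of $U$ by induced arc sets $A[X_i]$. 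Fixing a covering that attains $r_2(U)$ and using $x\ge 0$ yields $x(U)\le\sum_i x(A[X_i])\le\sum_i(b(X_i)-1)=r_2(U)$, so that \eqref{EQbbpoly2} together with $x\ge 0$ implies every rank inequality of $M_2$. Combining the two reductions shows that the polyhedron defined by \eqref{EQbbpoly1}--\eqref{EQbbpoly3} equals the matroid intersection polytope, i.e.\ the $b$-branching polytope, which settles the ``in particular'' clause.

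It remains to establish box-TDI. The most robust route is a direct uncrossing argument: for integral $w$ and an integer box $d\le x\le c$, take an optimal dual solution to the maximization of $\sum_a w(a)x(a)$ over \eqref{EQbbpoly1}--\eqref{EQbbpoly3} together with $d\le x\le c$. The degree constraints \eqref{EQbbpoly1} are indexed by the blocks $\delta^-(v)$ and are handled as in the partition-matroid case, while the sets $X$ tight in \eqref{EQbbpoly2} can be made laminar using submodularity of the slack $X\mapsto(b(X)-1)-x(A[X])$, which holds because $b$ is modular and $X\mapsto x(A[X])$ is supermodular for $x\ge 0$. A laminar family gives a totally unimodular structure, from which an integral optimal dual follows for every integer box. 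At a higher level, one may instead invoke that box-TDI is a property of the polyhedron: \eqref{EQbbpoly1}--\eqref{EQbbpoly3} describes the (box-TDI) matroid intersection polytope by inequalities with primitive integral normal vectors.

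I expect the main obstacle to be the $M_2$ reduction together with the care needed to ensure that box-TDI survives the passage to the compact system. The partition-matroid part is routine, but writing the count-matroid rank function and verifying that the vertex-indexed inequalities $x(A[X])\le b(X)-1$ capture all of its rank inequalities is where the combinatorial content sits; equally, in the direct approach the delicate point is the simultaneous uncrossing of the two dual families and the verification that an optimal dual can be taken integral for every choice of integer box.
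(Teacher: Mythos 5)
Your main route is essentially the paper's: the paper justifies Theorem~\ref{THMbbTDI} (attributed to \cite{KKT18}) precisely by the observation that $b$-branchings are the common independent sets of the two matroids \eqref{EQbbdeg} and \eqref{EQbbsp}, so that \eqref{EQbbpoly1}--\eqref{EQbbpoly3} is the associated matroid-intersection relaxation. Your two reductions (the partition-type rank bound via $0\le x\le 1$, and the count-matroid rank bound via coverings by induced arc sets) correctly fill in the polytope equality, and your explicit uncrossing argument is sound in outline: tight sets in \eqref{EQbbpoly2} can be uncrossed because $b$ is modular on intersecting pairs and $X\mapsto x(A[X])$ is supermodular for $x\ge 0$ (note $A[X]\cap A[Y]=A[X\cap Y]$), the resulting laminar vertex family induces a laminar family of arc sets, the degree constraints have pairwise disjoint supports, and two laminar families over $A$ yield a totally unimodular active matrix; this is exactly the scheme (cross-free family, network matrix, then \cite[Theorem~5.35]{Sch03}) that the present paper itself uses to prove its Theorem~\ref{THMbbbtdi} for $b$-bibranchings.

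One step of your proposal is genuinely invalid and must not be leaned on: the ``higher level'' claim that box-TDIness follows because \eqref{EQbbpoly1}--\eqref{EQbbpoly3} describes a box-TDI polyhedron by inequalities with primitive integral normal vectors. Total dual integrality is a property of the \emph{system}, not of the polyhedron: passing from the full family of rank inequalities $x(U)\le r_i(U)$ ($U\subseteq A$) to the compact subfamily can in principle destroy TDIness even though the polyhedron is unchanged, and primitivity of the normals does not repair this. The correct polyhedral statement (due to Cook; see Section~5.20 of \cite{Sch03}) is that every \emph{TDI} system describing a box-TDI polyhedron is box-TDI---so invoking it presupposes the TDIness of \eqref{EQbbpoly1}--\eqref{EQbbpoly3}, which is the very content of the theorem. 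Consequently your uncrossing argument is not an optional ``robust route'' but the necessary core of the proof; with it carried out (including the routine check that uncrossing preserves dual feasibility and objective value, and that appending the box rows keeps the active matrix totally unimodular), your argument is complete.
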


What is more, 
$b$-branchings inherit several good properties of branchings. 
In \cite{KKT18}, 
a multi-phase greedy algorithm for finding a longest $b$-branching 
and 
a theorem on packing disjoint $b$-branchings are presented. 
The former is an extension of that for finding a longest branching \cite{Boc71,CL65,Edm67,Ful74}. 
The latter is an extension of that for packing disjoint branchings (Theorem \ref{THMbrpacking}) and 
is described as follows.

\begin{theorem}[\cite{KKT18}]
\label{THMbbpacking}
Let $D=(V,A)$ be a digraph, 
$b \in \ZZ_{++}^V$ be a positive integer vector on $V$, 
and 
$k$ be a positive integer. 
For $j \in [k]$, 
let $b_j \in \ZZ_{+}^V$ be a vector such that 
$b_j(v) \le b(v)$ for every $v \in V$ and 
$b_j \neq b$. 
Then, 
$D$ has disjoint $b$-branchings $B_1,\ldots, B_k$ such that 
$d_{B_j}^- = b_j$
if and only if 
the following two conditions are satisfied: 
\begin{alignat}{2}
\label{EQpackingdeg}
{}&{}d_A^-(v) \ge \sum_{j=1}^k b_j(v)\quad {}&&{}(v \in V),\\
\label{EQpackingcut}
{}&{}d_A^-(X) \ge |\{ j \in [k] \colon b_j(X) = b(X) \neq 0\}| \quad{}&&{} (\emptyset \neq X \subseteq V). 
\end{alignat}
\end{theorem}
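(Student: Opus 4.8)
The plan is to prove Theorem~\ref{THMbbpacking} by reducing the packing problem for $b$-branchings to the supermodular coloring theorem (Theorem~\ref{THMsupcol}), in the same spirit that Schrijver~\cite{Sch85} used to re-derive the packing theorem for bibranchings. The necessity of \eqref{EQpackingdeg} and \eqref{EQpackingcut} is routine: if disjoint $b$-branchings $B_1,\ldots,B_k$ with $d_{B_j}^-=b_j$ exist, then summing the indegree conditions at a single vertex gives \eqref{EQpackingdeg}, while for a nonempty $X$ one counts the arcs entering $X$. The key observation here is that a $b$-branching $B_j$ restricted to $D[X]$ satisfies the sparsity bound $|B_j[X]|\le b(X)-1$, so whenever $b_j(X)=b(X)\neq 0$ the indegree demand inside $X$, namely $b_j(X)=b(X)$, cannot be met using only arcs of $B_j$ inside $X$; at least one arc of $B_j$ must enter $X$ from outside. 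Counting these forced arcs over the relevant indices $j$ yields \eqref{EQpackingcut}.

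The substance is the sufficiency direction. First I would set up the ground set for the coloring problem. The natural choice is to take $H$ to be a multiset encoding the indegree demands: for each vertex $v\in V$ we create $d_A^-(v)$ copies corresponding to the arcs of $\delta^-(v)$, or more precisely work with the arc set $A$ itself as $H$, so that a partition of $H$ into $k$ classes $H_1,\ldots,H_k$ corresponds to assigning each arc to one of the $k$ prospective branchings. The conditions that each class $H_j$ be a $b$-branching with prescribed indegree $b_j$ must then be translated into the ``meeting'' constraints $g_i(C)\le|\{j\colon H_j\cap C\neq\emptyset\}|$ of Theorem~\ref{THMsupcol}, which forces me to define two intersecting families $\C_1,\C_2$ and two intersecting supermodular functions $g_1,g_2$ encoding respectively the degree/sparsity requirements and the root-set requirements. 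The cut condition \eqref{EQpackingcut} should become the hypothesis $g_i(C)\le\min\{k,|C|\}$ of the supermodular coloring theorem, while \eqref{EQpackingdeg} guarantees that the total demand can be distributed.

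The main obstacle I anticipate is that the $b$-branching constraints are genuinely \emph{matroidal}, combining both the indegree caps \eqref{EQbbdeg} and the sparsity caps \eqref{EQbbsp}, and a single application of supermodular coloring with one pair of supermodular functions does not obviously capture both simultaneously with the correct integrality. This is precisely why the authors announce that their proof ``extends Tardos' proof~\cite{Tar85} for the supermodular coloring theorem using generalized polymatroids.'' Accordingly, rather than invoking Theorem~\ref{THMsupcol} as a black box, I would reprove it at the level of generalized polymatroids: encode the demand vectors $b_1,\ldots,b_k$ as a point that must lie in an intersection of generalized polymatroids built from the intersecting supermodular functions, and use the splitting, aggregation, and intersection-with-a-box operations of Theorem~\ref{THgp}\eqref{ENUclosed} together with the integrality of the intersection of two integer generalized polymatroids (Theorem~\ref{THgp}\eqref{ENUintersection}) to extract an \emph{integral} coloring. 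The delicate steps will be (i) verifying that the functions defining the forced-meeting constraints are indeed intersecting supermodular, so that the relevant polyhedra really are generalized polymatroids, and (ii) checking that an integral point of the intersection decodes back into genuine disjoint $b$-branchings with $d_{B_j}^-=b_j$ rather than merely arc subsets satisfying a relaxation; the acyclicity/sparsity side of \eqref{EQbbsp} is where this decoding is least automatic and will require the most care.
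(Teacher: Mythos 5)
This statement is not proved in the paper at all: Theorem~\ref{THMbbpacking} is imported verbatim from \cite{KKT18} as a preliminary, and the paper then uses it as a black box, both to produce the disjoint $b$-branchings in $D[T]$ and $b$-cobranchings in $D[S]$ inside the proof of Theorem~\ref{THMbbbpacking}, and again in the proof of Lemma~\ref{LEMtwopartition}. Your necessity argument is correct. Your sufficiency plan, however, is essentially to run the machinery of Section~\ref{SECbbbpacking} one level down, on $b$-branchings in an arbitrary digraph, and that is precisely where the approach fails; the two ``delicate steps'' you defer are not technicalities but the points where the method breaks.

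The concrete gap is your step (i): the intersecting families and intersecting supermodular functions you need cannot be built on the arc set of a general digraph. In the proof of Theorem~\ref{THMbbbpacking} the coloring lives only on the bipartite layer $H=A[S,T]$, and everything hinges on \eqref{EQpartition}: because $H$ has no arcs inside $T$, the map $U\mapsto\delta_H^-(U)$ commutes with unions and intersections, which is exactly what makes $\C_1$ an intersecting family (Claim~\ref{CLintersecting}) and $g_1$ in \eqref{EQg1} well defined and intersecting supermodular (Claim~\ref{CLism}). In a general digraph this fails: for $V=\{u,v,w\}$ and $A=\{uv,\,wv,\,uw\}$, the in-cuts $\delta_A^-(\{v\})=\{uv,wv\}$ and $\delta_A^-(\{v,w\})=\{uv,uw\}$ intersect, yet neither their union nor their intersection is an in-cut, so the family of in-cuts is not an intersecting family and the construction \eqref{EQg1} has no analogue. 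There are two further structural mismatches. First, Theorem~\ref{THMsupcol} is symmetric in the color classes --- it only controls $|\{j\colon H_j\cap C\neq\emptyset\}|$ --- so it cannot enforce the class-specific prescriptions $d_{B_j}^-=b_j$ with distinct $b_j$; note that for $b\equiv 1$ your statement \emph{is} Edmonds' theorem (Theorem~\ref{THMbrpacking}) with arbitrary distinct root sets, whereas the coloring technique of \cite{Sch85} is only known to yield the symmetric special case $R_1=\cdots=R_k=\{s\}$, as the paper itself remarks after Theorem~\ref{THMbibrpacking}. Second, supermodular coloring partitions \emph{all} of $H$, which suits $b$-bibranchings (preserved under adding arcs) but not $b$-branchings (preserved only under deleting arcs), so leftover arcs cannot be absorbed into any class. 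Your fallback of peeling off one class at a time via generalized polymatroids does not escape these problems: the class you peel off must satisfy simultaneously the degree prescription, the sparsity constraints $x(A[X])\le b(X)-1$, and the residual cut constraints $x(\delta_A^-(X))\le d_A^-(X)-|\{j\ge 2\colon b_j(X)=b(X)\neq 0\}|$, and Theorem~\ref{THgp}\ref{ENUintersection} gives integrality only for the intersection of \emph{two} generalized polymatroids, while the cut-type constraints do not form a generalized polymatroid over $\RR^A$ for the same reason as above. This is why the paper delegates everything inside $D[T]$ and $D[S]$ to Theorem~\ref{THMbbpacking} instead of proving it: the Section~\ref{SECbbbpacking} technique presupposes this theorem, and a genuinely different argument (the one in \cite{KKT18}) is required to establish it.
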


\subsection{Definition of $b$-bibranching}

We finally define \emph{$b$-bibranchings}, 
the central concept in this paper. 
Let $D=(V,A)$ be a digraph, 
and let $V$ be partitioned into two nonempty subsets $S$ and $T$. 
Let $b  \in \ZZ_{++}^V$ be a positive integer vector on the vertex set $V$. 
An arc subset $B \subseteq A$ is a \emph{$b$-bibranching} if 
it satisfies the following four properties: 
\begin{align}
\label{EQbbbdef1}
&{}\mbox{every vertex $v \in T$ is reachable from some vertex in $S$ in the subgraph $(V,B)$}, \\
\label{EQbbbdef2}
&{}\mbox{every vertex $u \in S$ reaches some vertex in $T$ in the subgraph $(V,B)$}, \\
\label{EQbbbdef3}
&{}\mbox{$d^-_B(v) \ge b(v)$ for every $v \in T$}, \\
\label{EQbbbdef4}
&{}\mbox{$d^+_B(u) \ge b(u)$ for every $u \in S$}. 
\end{align}

The aforementioned special cases of $b$-bibranchings, 
i.e.,\ 
branchings, bibranchings, and $b$-branchings, 
are obtained as follows. 
If we assume that 
(a) $b(v)=1$ for every $v \in V$ 
and 
(b) $S$ is a singleton $\{s\}$, 
then 
an inclusion-wise minimal $b$-bibranching $B$ is exactly a branching with $d_B^-(v) = 1$ for each $v \in V \setminus \{s\}$ (an $s$-arborescence). 
If we only have Assumption (a), 
a $b$-bibranching is exactly a bibranching 
defined by Schrijver \cite{Sch82}. 
If we only have Assumption (b), 
an inclusion-wise minimal $b$-bibranching $B$ is exactly a $b$-branching \cite{KKT18} 
with $d_B^-(s)=0$ and $d_B^-(v) = b(v)$ for each $v \in V \setminus \{s\}$. 

In view of the definition of 
bibranchings by \eqref{EQMT1}--\eqref{EQMT4}, 
an 
alternative description of $b$-bibranchings is 
as follows. 
Call an arc subset $B \subseteq A$ a 
\emph{$b$-cobranching} if 
the reversal of the arcs in $B$ is a $b$-branching. 
Then, 
an arc subset $B \subseteq A$ is a $b$-bibranching if 
$B[T]$ is a $b|_T$-branching in $D[T]$ and 
$B[S]$ is a $b|_S$-cobranching in $D[S]$, 
as well as 
\eqref{EQbbbdef3} and \eqref{EQbbbdef4}. 

In the sequel, 
we present extensions of the aforementioned results on 
branchings, 
bibranchings, 
and $b$-branchings 
to $b$-bibranchings. 

\section{TDI system for $b$-bibranchings}
\label{SECtdi}

In this section, 
we present a linear programming formulation for the shortest $b$-bibranching problem, 
and prove its total dual integrality. 
This is a common extension of that for bibranchings (Theorem \ref{THMbibrTDI}) and 
that for $b$-branchings (Theorem \ref{THMbbTDI}). 
Our proof is based on that for bibranchings by Schrijver \cite{Sch03}. 

Let $D=(V,A)$ be a digraph, 
$w \in \RR_+^A$ be a vector representing the arc weights, 
$\{S,T\}$ be a partition of $V$, where $S,T \neq \emptyset$, 
and 
$b\in \ZZ_{++}^V$ be a positive integer vector on $V$. 
The following linear program in variable $x \in \RR\sp{A}$ is a relaxation of the 
shortest $b$-bibranching problem: 
\begin{alignat}{3}
\label{EQlpobj}
&{}\mbox{minimize}\quad{}&&{} \sum_{a \in A}w(a)x(a) &&\\
&{}\mbox{subject to} {}&
\label{EQpoly1}
&{}x(\delta^-(v)) \ge b(v) \quad{}&&{} \mbox{for each $v \in T$}, \\
&&&{}x(\delta^+(v)) \ge b(v) {}&&{} \mbox{for each $v \in S$}, \\
\label{EQpoly3}
&&&{}x(C) \ge 1 {}&&{} \mbox{for each bicut $C$}, \\
\label{EQpoly4}
&&&{} x(a) \ge 0 {}&&{} \mbox{for each $a \in A$}. 
\end{alignat}
Note that an integer feasible solution $x$ for this linear program can have $x(a)\ge 2$. 
In such a case, 
$x$ is not the characteristic vector of a $b$-bibranching. 
However, 
we prove that this linear program is indeed box-TDI (Theorem \ref{THMbbbtdi}), 
and then obtain a linear description of the $b$-bibranching polytope by 
taking the intersection with a box $[0,1]^{A}$ (Corollary \ref{CORbbbpoly}).

\begin{theorem}
\label{THMbbbtdi}
The linear system \eqref{EQpoly1}--\eqref{EQpoly4} is box-TDI. 
\end{theorem}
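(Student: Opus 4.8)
The plan is to follow Schrijver's uncrossing proof for bibranchings (Theorem~\ref{THMbibrTDI}), folding the degree demands $b(v)$ into a single cut system. The first step is to rewrite every constraint among \eqref{EQpoly1}--\eqref{EQpoly3} uniformly as $x(\delta^-(U)) \ge h(U)$ for $U$ ranging over one family $\F$ of vertex subsets, together with $x \ge 0$. Since $D$ has no arc from $T$ to $S$, the bicut $\delta^-(U)$ with $T \subseteq U \subsetneq V$ equals $\delta^+(V \setminus U)$; hence the degree constraint $x(\delta^+(v)) \ge b(v)$ for $v \in S$ is the constraint for $U = V \setminus \{v\}$, and $x(\delta^-(v)) \ge b(v)$ for $v \in T$ is the constraint for $U = \{v\}$. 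Thus $\F = \{U : \emptyset \neq U \subseteq T\} \cup \{U : T \subseteq U \subsetneq V\}$, and the demand $h\colon \F \to \ZZ$ is given by $h(\{v\}) = b(v)$ for $v \in T$, $h(V \setminus \{v\}) = b(v)$ for $v \in S$, and $h(U) = 1$ otherwise.

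Two structural facts underlie the proof. First, $\F$ is a crossing family: any $U \subseteq T$ and $U' \supseteq T$ satisfy $U \subseteq U'$, so crossing pairs lie entirely within $\{U : \emptyset \ne U \subseteq T\}$ or within $\{U : T \subseteq U \subsetneq V\}$, and each of these is closed under the union and intersection of a crossing pair (both of which remain nonempty and proper). Second, $h$ is crossing supermodular, i.e., $h(U) + h(U') \le h(U \cap U') + h(U \cup U')$ whenever $U, U' \in \F$ cross. A singleton can never belong to a crossing pair, so in a crossing pair both sets are non-singleton; the only nontrivial case is then when the intersection is a singleton $\{v\}$, where the inequality reads $1 + 1 \le b(v) + 1$ and holds precisely because $b(v) \ge 1$, all other cases holding with equality. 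This is exactly the point at which the positivity of $b$ enters.

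With these facts in hand I would prove box-TDI directly. Fix an integral weight vector $w$ and rational box bounds, append $l \le x \le u$ to the system, and pass to the dual, which has nonnegative variables $\pi_U$ for $U \in \F$ and $\mu_a, \lambda_a$ for the box bounds. Among all optimal dual solutions, select one minimizing the strictly concave potential $\sum_{U} \pi_U\,|U|\,(|V|-|U|)$, and show by the standard uncrossing step that its support $\{U : \pi_U > 0\}$ is cross-free. Replacing a crossing pair $U, U'$ by $U \cap U'$ and $U \cup U'$ keeps the dual feasible, because submodularity of $d^-$ guarantees that no arc gains coverage, and keeps it optimal, because crossing supermodularity of $h$ guarantees that the dual objective does not decrease; meanwhile the potential strictly decreases, contradicting the choice of the dual solution. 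Once the support is a cross-free subfamily of $\F$, the submatrix of $\delta^-$-incidences over this subfamily is a network matrix and hence totally unimodular, and appending the $\pm I$ rows coming from the box preserves total unimodularity. As $w$, $h$, $l$, and $u$ are integral, the restricted dual polyhedron is integral, so an integral optimal dual solution exists, establishing box-TDI.

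I expect the main obstacle to be the uncrossing step in the presence of the mixed right-hand side. Verifying that the simultaneous decrease of $\pi_U, \pi_{U'}$ and increase of $\pi_{U \cap U'}, \pi_{U \cup U'}$ preserves every arc's dual constraint and loses no dual optimality requires careful bookkeeping of the supermodularity inequality, and one must track the interaction of the large singleton demands $b(v)$ with the unit demands on larger sets, confirming in particular that the supermodularity slack $b(v)-1$ is never consumed in a way that violates feasibility. The subsequent total-unimodularity argument, and the observation that the box dual variables $\mu_a,\lambda_a$ come out integral once the $\pi_U$ are, are then routine.
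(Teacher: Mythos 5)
Your proposal is correct and follows essentially the same route as the paper's proof: you uncross an optimal dual solution chosen to minimize $\sum_{U}\pi_U\,|U|\,|V\setminus U|$, with the slack $b(v)-1\ge 0$ absorbing the case of a singleton intersection (this is exactly the paper's key observation, phrased by you as crossing supermodularity of $h$), and then conclude via the cross-free family / network matrix / total unimodularity argument. The only cosmetic differences are that you merge the degree and bicut constraints into a single crossing family with demand $h$ and handle the box rows explicitly, whereas the paper keeps the constraints separate in its dual and invokes Schrijver's box-TDI criterion (Theorem 5.35) to finish.
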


\begin{proof}
Define $\U \subseteq 2^V$ by 
\begin{align*}
&{}\U =  \{\{v\} \colon v \in V\} \cup \U', &
&{}\U' = \{U \subseteq T \colon |U| \ge 2\} \cup \{U \supseteq T \colon |V \setminus U| \ge 2\}. 
\end{align*}
Now consider the following dual linear program of \eqref{EQlpobj}--\eqref{EQpoly4} in variable $y \in \RR\sp{\U}$: 
\begin{alignat}{3}
\label{EQdual1}
&{}\mbox{maximize} \quad 	{}&&{}\sum_{v \in V}b(v)y(v) + \sum_{U \in \U'}y(U) &&\\
\label{EQdual2}
&{}\mbox{subject to} \quad 	&&{}{}y(\partial^-a)+\sum_{U \in \U', a \in \delta^-U}y(U) \le w(a) \quad {}&&{}\mbox{for each $a \in A[T]$},\\
&{} {}&&{}{}y(\partial^+a)+\sum_{U \in \U', a \in \delta^-U}y(U) \le w(a) \quad {}&&{}\mbox{for each $a \in A[S]$},\\
&{} {}&&{}{}y(\partial^-a)+y(\partial^+a)+\sum_{U \in \U', a \in \delta^-U}y(U) \le w(a) \quad {}&&{}\mbox{for each $a \in A[S,T]$},\\
\label{EQdual3}
&{} {}&&{}y(U) \ge 0 \quad {}&&{}\mbox{for each $U \in \U$}. 
\end{alignat}

Let $y^* \in \RR\sp{\U}$ be an optimal solution for the linear program \eqref{EQdual1}--\eqref{EQdual3} 
minimizing 
$\sum_{U \in \U} y(U) \cdot |U| \cdot |V \setminus U|$. 
We prove that 
the collection of $U \in \U$ such that $y^*(U)>0$ is cross-free, 
i.e., 
there exists no pair of $X,Y \in \U$ ($X \neq Y$) 
such that 
$y^*(X),y^*(Y) >0$ 
and 
the four sets  
$X \setminus Y$, 
$Y \setminus X$, 
$X \cap Y$, 
and 
$V \setminus (X \cup Y)$ 
are nonempty. 
\begin{claim}
\label{CLcrossfree}
The vertex subset family $\F^*=\{U \in \U \colon y^*(U) >0\}$ is cross-free. 
\end{claim}
\begin{proof}[Proof of Claim \ref{CLcrossfree}.]
Assume to the contrary that $\F^*$ is not cross-free because 
$X,Y \in \F^*$ violate the condition. 
Let $\alpha = \min\{y^*(X),y^*(Y)\}$ and 
define $y' \in \RR\sp{\U}$ by 
\begin{align}
y'(U) = 
\begin{cases}
y^*(U) - \alpha & (U = X,Y), \\
y^*(U) + \alpha & (U = X\cup Y, X\cap Y), \\
y^*(U) & (\mbox{otherwise}). \\
\end{cases}
\end{align}
Then, 
it is straightforward to see that 
$y'$ satisfies \eqref{EQdual2} and \eqref{EQdual3}. 
It is also not difficult to see that 
the value of \eqref{EQdual1} when $y=y'$ is at least the value of \eqref{EQdual1} when $y=y^*$. 
Indeed, 
if $X \cap Y = \{v\}$ for some $v \in V$, 
the value increases by $(b(v) - 1)\alpha$, 
and otherwise 
the value does not change. 
Therefore, 
$y'$ is also an optimal solution for the linear program \eqref{EQdual1}--\eqref{EQdual3}. 
Moreover,  
it holds that 
$\sum_{U \in \U} y'(U) \cdot |U| \cdot |V \setminus U| < \sum_{U \in \U} y^*(U) \cdot |U| \cdot |V \setminus U|$. 
This contradicts the minimality of $y^*$, 
and thus we conclude that $\F^*$ is cross-free. 
\end{proof}

From Claim \ref{CLcrossfree}, 
it follows that 
the $\F^* \times A$ matrix $M$ defined below is a network matrix \cite[Theorem 54.8]{Sch03}:  
\begin{align}
M_{U,a} = 
\begin{cases}
1 & (a \in \delta^-(U)), \\
0 & (\mbox{otherwise}) 
\end{cases}
\quad (U \in \F^*, a \in A). 
\end{align}
Since a network matrix is totally unimodular \cite{Tut65}, 
we obtain that $M$ is totally unimodular. 
It then follows that 
the system \eqref{EQpoly1}--\eqref{EQpoly4} is box-TDI \cite[Theorem 5.35]{Sch03}.\@ 
\end{proof}

The following corollary is a direct consequence of Theorem \ref{THMbbbtdi}. 

\begin{corollary}
\label{CORbbbpoly}
The linear system defined by 
\eqref{EQpoly1}--\eqref{EQpoly4} and 
\begin{align}
\label{EQpoly5}
&{}  x(a) \le 1 \quad \mbox{for each $a \in A$}. 
\end{align}
is totally dual integral. 
In particular, 
the $b$-bibranching polytope is determined by \eqref{EQpoly1}--\eqref{EQpoly4} and \eqref{EQpoly5}. 
\end{corollary}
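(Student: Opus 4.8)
The plan is to deduce both assertions from the box-total dual integrality established in Theorem~\ref{THMbbbtdi}. For the first assertion, observe that \eqref{EQpoly5}, together with the nonnegativity constraints already present in \eqref{EQpoly4}, amounts to intersecting the feasible region of \eqref{EQpoly1}--\eqref{EQpoly4} with the box $[0,1]^A$. Since \eqref{EQpoly1}--\eqref{EQpoly4} is box-TDI by Theorem~\ref{THMbbbtdi}, the enlarged system consisting of \eqref{EQpoly1}--\eqref{EQpoly4} and \eqref{EQpoly5} is TDI; this is immediate from the definition of box-TDI and requires no further work.

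For the second assertion, let $P$ denote the polytope defined by \eqref{EQpoly1}--\eqref{EQpoly4} and \eqref{EQpoly5}. Because this system is TDI and all right-hand sides ($b(v)$, $1$, and $0$) are integral, $P$ is an integral polytope \cite{Sch03}, so that $P = \mathrm{conv}(P \cap \ZZ^A)$. The constraints \eqref{EQpoly4} and \eqref{EQpoly5} force every integer point of $P$ to lie in $\{0,1\}^A$, hence to be the characteristic vector $\chi_B$ of some arc subset $B \subseteq A$. It therefore suffices to show that, for $B \subseteq A$, the vector $\chi_B$ satisfies \eqref{EQpoly1}--\eqref{EQpoly4} and \eqref{EQpoly5} if and only if $B$ is a $b$-bibranching; the claim then follows, since the $b$-bibranching polytope is by definition the convex hull of the vectors $\chi_B$ over all $b$-bibranchings $B$.

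The degree constraints in \eqref{EQpoly1}--\eqref{EQpoly4} read $d_B^-(v) \ge b(v)$ for $v \in T$ and $d_B^+(v) \ge b(v)$ for $v \in S$, which are exactly \eqref{EQbbbdef3} and \eqref{EQbbbdef4}. The remaining content, and the main (though standard) step, is to verify that for a $0$--$1$ vector the bicut constraints \eqref{EQpoly3} are equivalent to the reachability conditions \eqref{EQbbbdef1} and \eqref{EQbbbdef2}. I would argue as follows. Consider the bicuts $\delta^-(U)$ with $\emptyset \neq U \subseteq T$. If every $v \in T$ is reachable from $S$, then for any such $U$ a directed $B$-path from $S$ to a vertex of $U$ must enter $U$, so $\delta_B^-(U) \neq \emptyset$. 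Conversely, letting $R$ be the set of vertices reachable from $S$ in $(V,B)$, one has $S \subseteq R$ and thus $V \setminus R \subseteq T$; if some vertex of $T$ were unreachable then $U := V \setminus R$ is a nonempty subset of $T$ with $\delta_B^-(U) = \emptyset$, violating \eqref{EQpoly3}. Hence the bicuts with $\emptyset \neq U \subseteq T$ encode \eqref{EQbbbdef1}. The symmetric argument, applied to $W := V \setminus U$ with $\emptyset \neq W \subseteq S$ (so that $\delta^-(U) = \delta^+(W)$) and using the set of vertices from which $T$ is reachable in place of $R$, shows that the bicuts with $T \subseteq U \subsetneq V$ encode \eqref{EQbbbdef2}.

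Combining these observations, $\chi_B \in P$ if and only if $B$ satisfies \eqref{EQbbbdef1}--\eqref{EQbbbdef4}, that is, $B$ is a $b$-bibranching. Since $P$ is integral and its integer points are precisely the characteristic vectors of $b$-bibranchings, $P$ coincides with the $b$-bibranching polytope. The only genuinely non-routine ingredient is the bicut--reachability equivalence of the previous paragraph; once that correspondence is in place, the remainder is a direct application of Theorem~\ref{THMbbbtdi} and the standard fact that an integral TDI system describes an integral polytope.
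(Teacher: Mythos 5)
Your proof is correct and takes exactly the route the paper intends: by the definition of box-TDI, adding \eqref{EQpoly5} to the box-TDI system \eqref{EQpoly1}--\eqref{EQpoly4} (Theorem~\ref{THMbbbtdi}) yields a TDI system with integer right-hand sides, hence an integral polytope, whose $0$--$1$ points are precisely the characteristic vectors of $b$-bibranchings. The paper states the corollary as a direct consequence without spelling out these steps; your write-up merely makes explicit the details it leaves implicit, in particular the standard equivalence between the bicut constraints \eqref{EQpoly3} and the reachability conditions \eqref{EQbbbdef1}--\eqref{EQbbbdef2}.
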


From Corollary \ref{CORbbbpoly}, 
it follows that the shortest $b$-bibranching problem can be solved in polynomial time via the ellipsoid method. 

\begin{corollary}
The shortest $b$-bibranching problem can be solved in polynomial time. 
\end{corollary}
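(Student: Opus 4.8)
The plan is to deduce the corollary from Corollary~\ref{CORbbbpoly} via the polynomial-time equivalence of separation and optimization. By Corollary~\ref{CORbbbpoly}, the feasible region of \eqref{EQpoly1}--\eqref{EQpoly4} together with the upper bounds \eqref{EQpoly5} is exactly the $b$-bibranching polytope, hence a bounded integer polyhedron contained in $[0,1]^A$. Minimizing the linear objective $\sum_{a \in A}w(a)x(a)$ over an integer polytope attains its optimum at an integer vertex, which is the characteristic vector of a $b$-bibranching; its value equals $\min\{w(B)\colon B \text{ is a }b\text{-bibranching}\}$. Thus it suffices to solve the linear program \eqref{EQlpobj}--\eqref{EQpoly4}, \eqref{EQpoly5} in polynomial time and to extract an optimal vertex.

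To run the ellipsoid method, the essential ingredient is a polynomial-time separation oracle. Given a candidate $x \in \RR^A$, the degree constraints \eqref{EQpoly1} (and their counterparts on $S$) number only $|V|$ and are checked directly, as are the box constraints \eqref{EQpoly4} and \eqref{EQpoly5}. The only exponentially large family is the bicut constraints \eqref{EQpoly3}, so the hard part will be separating these efficiently. I would reduce this to minimum-cut computations in the digraph $D$ with arc capacities $x$, which is legitimate since by first testing \eqref{EQpoly4} we may assume $x \ge 0$.

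Concretely, a bicut is $\delta^-(U)$ with either $\emptyset \neq U \subseteq T$ or $T \subseteq U \subsetneq V$. For the first type I would contract $S$ into a single source vertex, so that every sink side is automatically contained in $T$; for each $t \in T$ I would compute a minimum-capacity directed cut separating the contracted source from $t$, obtaining $\min\{x(\delta^-(U))\colon t \in U \subseteq T\}$, and minimizing over $t$ yields $\min\{x(\delta^-(U))\colon \emptyset \neq U \subseteq T\}$. Symmetrically, for the second type I would contract $T$ into a single sink vertex and, for each $s \in S$, compute a minimum cut forcing $s$ to the source side, obtaining $\min\{x(\delta^-(U))\colon T \subseteq U \subsetneq V\}$. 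If both minima are at least $1$ and the explicitly listed constraints hold, then $x$ is feasible; otherwise the minimizing set $U$ furnishes a violated bicut constraint. This uses $O(|V|)$ maximum-flow computations and runs in polynomial time.

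Given this oracle, the ellipsoid method optimizes over the polytope in time polynomial in the encoding length of $D$, $b$, and $w$, and returns an optimal vertex; by the integrality established in Corollary~\ref{CORbbbpoly} this vertex is the characteristic vector of a shortest $b$-bibranching. The only nonroutine step is the separation of \eqref{EQpoly3}, which I expect to be the main obstacle; once it is cast as the $O(|V|)$ minimum-cut problems above, the remainder is a direct appeal to the ellipsoid method together with Corollary~\ref{CORbbbpoly}.
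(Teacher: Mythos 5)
Your proposal is correct and matches the paper's approach: the paper likewise derives this corollary directly from Corollary~\ref{CORbbbpoly} via the ellipsoid method, leaving the standard separation-oracle details (which you correctly supply via max-flow computations for the bicut constraints \eqref{EQpoly3}) implicit.
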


\section{Packing disjoint \B{bibranchings}}
\label{SECbbbpacking}

In this section, 
we prove a theorem on packing disjoint $b$-branchings, 
which is a common extension of Theorems \ref{THMbrpacking}, \ref{THMbibrpacking}, and \ref{THMbbpacking}. 
Our proof is an extension of Tardos' proof for the supermodular coloring theorem (Theorem \ref{THMsupcol}) 
using generalized polymatroids. 

\begin{theorem}
\label{THMbbbpacking}
Let $D=(V,A)$ be a digraph, 
$\{S,T\}$ be a partition of $V$, where $S,T \neq \emptyset$, 
and 
$b\in \ZZ_{++}^V$ be a positive integer vector on $V$. 
Then, 
the maximum number of disjoint $b$-bibranchings is equal 
to the minimum of the following three values: 
\begin{align}
\label{EQminT}
&{}\min\left\{ \left\lfloor\frac{d_A^-(v)}{b(v)}\right\rfloor \colon v \in T \right\}; \\
\label{EQminS}
&{}\min\left\{ \left\lfloor\frac{d_A^+(v)}{b(v)}\right\rfloor \colon v \in S \right\}; \\
\label{EQminC}
&{}\min\{ |C| \colon \mbox{$C$ is a bicut}\}.
\end{align}
\end{theorem}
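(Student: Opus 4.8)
The plan is to prove the two inequalities $(\text{max})\le(\text{min})$ and $(\text{max})\ge(\text{min})$ separately. For the easy direction, suppose $B_1,\dots,B_k$ are disjoint \B{bibranchings}. For each $v\in T$, condition \eqref{EQbbbdef3} gives $d^-_{B_j}(v)\ge b(v)$, so disjointness yields $d^-_A(v)\ge\sum_j d^-_{B_j}(v)\ge k\,b(v)$, whence $k\le\lfloor d^-_A(v)/b(v)\rfloor$; the symmetric argument with \eqref{EQbbbdef4} bounds $k$ by \eqref{EQminS}. Finally, every \B{bibranching} meets every bicut $C$, since reachability \eqref{EQbbbdef1}--\eqref{EQbbbdef2} forces $\delta^-_{B_j}(U)\neq\emptyset$ for the set $U$ defining $C$; disjointness then gives $|C|\ge k$, bounding $k$ by \eqref{EQminC}. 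Hence $k$ is at most the minimum of \eqref{EQminT}--\eqref{EQminC}.

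For the reverse direction I would induct on $k$, assuming $k$ is at most all three quantities. It suffices to produce a single \B{bibranching} $B_k\subseteq A$ whose removal leaves $(V,A\setminus B_k)$ still satisfying the hypothesis with $k$ replaced by $k-1$; the inductive hypothesis applied to $A\setminus B_k$ then yields $B_1,\dots,B_{k-1}$ disjoint from $B_k$. Concretely, $B_k$ must satisfy $b(v)\le d^-_{B_k}(v)\le d^-_A(v)-(k-1)b(v)$ for $v\in T$, the symmetric bounds on $d^+_{B_k}(v)$ for $v\in S$, and $1\le|B_k\cap C|\le|C|-(k-1)$ for every bicut $C$ arising from $\U'$. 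The lower bounds make $B_k$ a \B{bibranching} (degrees \eqref{EQbbbdef3}--\eqref{EQbbbdef4} and reachability \eqref{EQbbbdef1}--\eqref{EQbbbdef2}), while the upper bounds are exactly what is needed for $A\setminus B_k$ to meet \eqref{EQminT}--\eqref{EQminC} for $k-1$.

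The existence of such $B_k$ I would establish with generalized polymatroids, extending Tardos' proof of Theorem \ref{THMsupcol}. I read the bounds above as constraints on the in-cut quantities $x(\delta^-(U))$ for $\emptyset\neq U\subseteq T$ and the out-cut quantities $x(\delta^+(W))$ for $\emptyset\neq W\subseteq S$, where the lower demand equals $b(v)$ on singletons and $1$ on the larger sets of $\U'$, and the upper demand is $d^-_A(\cdot)-(k-1)$ times the lower one. As in the proof of Theorem \ref{THMbbbtdi}, this lower demand is intersecting supermodular (the uncrossing gain $(b(v)-1)\alpha$ is nonnegative), and the upper demand is intersecting submodular, being the submodular in-cut function minus $(k-1)$ times a supermodular function. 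Consequently the $T$-side and the $S$-side constraints each cut out a generalized polymatroid in $\RR^A$, realized from the vertex-indexed demand functions via the splitting and aggregation operations, which preserve the class by Theorem \ref{THgp}\ref{ENUclosed}. Intersecting each with the box $[0,1]^A$ again gives generalized polymatroids (Theorem \ref{THgp}\ref{ENUclosed}), defined by integer functions, so their mutual intersection is an integer polyhedron by Theorem \ref{THgp}\ref{ENUintersection}. This polyhedron is nonempty because the uniform vector $x\equiv 1/k$ satisfies every bound---each min-hypothesis is precisely the inequality $d^-_A(v)\ge k\,b(v)$, $d^+_A(v)\ge k\,b(v)$, or $|C|\ge k$ required---so it contains an integer point, whose support is the desired $B_k$.

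The main obstacle is the polymatroid verification in the third paragraph. One must check that the mixed demand functions---value $b(v)$ on singletons but $1$ on the larger bicut sets---are genuinely intersecting supermodular and submodular and form a compatible paramodular pair on each of the two intersecting families, and that the two families, which share the crossing arcs in $A[S,T]$, can be handled as two separate generalized polymatroids in the common space $\RR^A$. Carrying this out, together with correctly realizing the in-cut systems by splitting and aggregation, is exactly the point at which Tardos' argument for Theorem \ref{THMsupcol} must be extended to accommodate the degree multiplicities $b(v)$; the passage from unit demands (ordinary supermodular coloring) to these $b$-weighted demands is where the new work lies.
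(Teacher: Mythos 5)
Your first paragraph (the direction $\max \le \min$) is correct and matches the paper, and your inductive scheme of extracting one $b$-bibranching $B_k$ obeying the stated upper bounds is sound as a reduction --- indeed it is easily seen to be \emph{equivalent} to the theorem. The entire burden therefore falls on the polyhedral existence claim in your third paragraph, and that is where there is a genuine gap: the constraint system you write down is not a generalized polymatroid, and none of the cited tools apply to it. The splitting/aggregation reduction to vertex-indexed demand functions works only when an arc's membership in $\delta^-(U)$ is determined by its head alone. That holds precisely for the crossing arcs --- this is what \eqref{EQpartition} expresses --- but fails for internal arcs $uv \in A[T]$, whose membership in $\delta^-(U)$ requires both $v \in U$ and $u \notin U$. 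Consequently $x(\delta^-(U))$ is not a sum of aggregated per-head variables, and the family $\{\delta^-(U) \colon \emptyset \neq U \subseteq T\} \subseteq 2^A$ is not even an intersecting family of arc sets: with $t_1,t_2 \in T$ and arcs $t_1t_2$, $st_1$, $st_2$, the set $\delta^-(t_2) \cap \delta^-(T) = \{st_2\}$ is not of the form $\delta^-(U)$, nor is $\delta^-(t_2) \cup \delta^-(T)$. So neither Theorem \ref{THgp}\ref{ENUclosed} nor Theorem \ref{THgp}\ref{ENUintersection} can be invoked, and the step you yourself flag as ``where the new work lies'' is not a routine extension of Tardos' argument for Theorem \ref{THMsupcol}; it is the whole difficulty. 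A sanity check that something must be wrong: your claim specialized to $S=\{s\}$ and $b \equiv 1$ would give an immediate g-polymatroid proof of Edmonds' disjoint branchings theorem (Theorem \ref{THMbrpacking}), which is implausible.

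The paper's proof circumvents exactly this obstacle by never placing cut constraints on internal arcs. It colors only $H = A[S,T]$: the intersecting families $\C_1,\C_2$ live in $2^H$, where \eqref{EQpartition} gives closure under union and intersection; the influence of the internal arcs is pushed into the \emph{right-hand side} through $g_1(C) = \max\{k - d_{A[T]}^-(U) \colon \emptyset \neq U \subseteq T,\ C = \delta_H^-(U)\}$, a constant rather than a variable quantity; and the degree caps $x(\delta_H^-(v)) \le d_A^-(v) - (k-1)b(v)$ likewise involve crossing arcs only. Once the partition $H_1,\ldots,H_k$ of $H$ with properties \eqref{EQcolored}--\eqref{EQatmostkS} is obtained, the internal arcs of each class are supplied by Theorem \ref{THMbbpacking}, applied to $D[T]$ with prescribed indegrees $b_j = b - d^-_{H_j}$ and to $D[S]$ symmetrically, yielding disjoint $b$-bibranchings $B_j^* \cup H_j \cup B_j$. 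Some such delegation to a packing theorem inside $D[T]$ and $D[S]$ (or a genuinely new argument replacing it) is indispensable: your proposal never uses Theorem \ref{THMbbpacking}, and uncrossing of full bicuts on the whole arc space cannot absorb that part of the theorem.
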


\begin{proof}
It is straightforward to see that 
the maximum number of disjoint $b$-bibranchings is at most the minimum of 
\eqref{EQminT}--\eqref{EQminC}. 
In what follows, 
we prove the opposite inequality. 

Let $k \in \ZZ_+$ be the minimum of \eqref{EQminT}--\eqref{EQminC}. 
Denote $H = A[S,T]$. 
Define $\C_1, \C_2 \subseteq 2^H$ by 
\begin{align*}
&{}\C_1 = \{ \delta^-_H(U) \colon \emptyset \neq U \subseteq T \}, &
&{}\C_2 = \{ \delta^+_H(U) \colon \emptyset \neq U \subseteq S \}.
\end{align*}
Note that 
\begin{align}
\label{EQpartition}
&\delta_H^-(U) = \bigcup_{v\in U} \delta_H^-(v) \quad (\emptyset \neq U \subseteq T), &
&\delta_H^+(U) = \bigcup_{v\in U} \delta_H^+(v) \quad (\emptyset \neq U \subseteq S). 
\end{align}
Define two functions $g_1 \colon \C_1 \to \ZZ$ and $g_2 \colon \C_2 \to \ZZ$ by 
\begin{align}
\label{EQg1}
&{}g_1(C) = \max\{ k - d^-_{A[T]}(U) \colon \mbox{$\emptyset \neq U \subseteq T$}, \mbox{$C = \delta_H^-(U)$}\} \quad (C \in \C_1), \\
&{}g_2(C) = \max\{ k - d^-_{A[S]}(U) \colon \mbox{$\emptyset \neq U \subseteq S$}, \mbox{$C = \delta_H^+(U)$}\} \quad (C \in \C_2). 
\end{align}
We now prepare Claims \ref{CLintersecting}--\ref{CLk} below. 
\begin{claim}
\label{CLintersecting}
The subset families $\C_1$ and $\C_2$ of $H$ are intersecting families. 
\end{claim}
\begin{proof}[Proof of Claim \ref{CLintersecting}]
We prove that $\C_1$ is an intersecting family. 
The argument directly applies to $C_2$ as well. 

Let $X,Y \in \C_1$ satisfy $X \cap Y \neq \emptyset$. 
Since $X,Y \in \C_1$, 
it follows that $X = \delta_H^-(U_X)$ and $Y = \delta_H^-(U_Y)$ for some nonempty sets $U_X,U_Y \subseteq T$.  
Then, 
it directly follows from \eqref{EQpartition} that 
$X \cup Y = \delta_H^-(U_X \cup U_Y)$ and 
$X \cap Y = \delta_H^-(U_X \cap U_Y)$.
Furthermore, 
it follows 
from $X \cap Y \neq \emptyset$ 
that $U_X \cap U_Y \neq \emptyset$. 
We thus obtain that $X \cup Y, X \cap Y \in \C_1$, 
implying that $\C_1$ is an intersecting family. 
\end{proof}

\begin{claim}
\label{CLism}
Functions $g_1$ and $g_2$ are intersecting supermodular. 
\end{claim}

\begin{proof}[Proof of Claim \ref{CLism}]
We prove that $g_1$ is an intersecting supermodular function. 
In the same manner, 
we can prove that $g_2$ is as well an intersecting supermodular function. 

Let $X,Y \in \C_1$ satisfy $X \cap Y \neq \emptyset$, 
and let $U_X, U_Y \subseteq T$ satisfy that 
$U_X, U_Y \neq \emptyset$, 
$X= \delta_H^-(U_X)$, 
$g_1(X) = k - d^-_{A[T]}(U_X)$, 
$Y= \delta_H^-(U_Y)$, 
and  
$g_1(Y) = k - d^-_{A[T]}(U_Y)$. 
Then, 
it follows from 
the definition \eqref{EQg1} of $g_1$ that 
\begin{align}
\label{EQintsup3}
&{}g_1(X\cup Y) \ge k - d_{A[T]}^-(U_X \cup U_Y), 
& 
&{}g_1(X\cap Y) \ge k - d_{A[T]}^-(U_X \cap U_Y). 
\end{align}
By the submodularity of $d_{A[T]}^-$, 
we have that 
\begin{align}
\label{EQintsup5}
d_{A[T]}^-(X) + d_{A[T]}^-(Y) 
\ge 
d_{A[T]}^-(X\cup Y)  + d_{A[T]}^-(X\cap Y) .
\end{align}
By \eqref{EQintsup3} and \eqref{EQintsup5}, 
we obtain 
\begin{align*}
g_1(X\cup Y)+g_1(X\cap Y) 
\ge {}&{}2k - d_{A[T]}^-(U_X \cup U_Y) - d_{A[T]}^-(U_X \cap U_Y) \\
\ge {}&{}2k - d_{A[T]}^-(U_X) - d_{A[T]}^-(U_Y) \\
=   {}&{}g_1(X) + g_1(Y). 
\end{align*}
We thus conclude that $g_1$ is intersecting supermodular. 
\end{proof}

\begin{claim}
\label{CLk}
For $i=1,2$, 
\begin{align}
\label{EQcolorassump}
g_i(C) \le \min\{k, |C|\} \quad \mbox{for each $C \in \C_i$}. 
\end{align}
\end{claim}

\begin{proof}
We show the case $i=1$. 
The other case $i=2$ can be shown in the same manner. 

For $C \in \C_1$, 
it directly follows from the definition \eqref{EQg1} of $g_1$ 
that $g_1(C)\le k$. 
To prove $g_1(C)\le |C|$, 
let $U$ be a nonempty subset of $T$ such that 
$C = \delta_H^-(U)$ and 
$g_1(C) = k - d^-_{A[T]}(U)$. 
Then, 
\begin{align}
\label{EQatmostC}
g_1(C) 
= k - d^-_{A[T]}(U) 
\le d^-_{A}(U) - d^-_{A[T]}(U) 
= d^-_{H}(U) 
= |C|, 
\end{align}
where the inequality in \eqref{EQatmostC} follows from the definition of $k$, 
i.e., 
$k$ is at most \eqref{EQminC}. 
We thus conclude that $g_1(C) \le \min\{k,|C|\}$ for each $C \in \C_1$. 
\end{proof}

Consider the following linear system in variable $x \in \RR^H$: 
\begin{alignat}{2}
\label{EQgpoly1}
{}&{} 0 \le x(a) \le 1 					{}&&{} \quad(a \in H), \\
\label{EQgpoly2}
{}&{} x(C) \le |C| - g_1(C) + 1			{}&&{} \quad(C \in \C_1), \\
\label{EQgpoly3}
{}&{} x(C) \ge 1 						{}&&{} \quad(\mbox{$C \in \C_1$, $g_1(C) = k$}), \\
\label{EQgpoly4}
{}&{} x(\delta_H^-(v)) \le d_A^-(v) - (k-1)b(v)	{}&&{} \quad(v \in T). 
\end{alignat}
Denote the polytope determined by \eqref{EQgpoly1}--\eqref{EQgpoly4} by $P_1 \subseteq \RR^H$. 
Note that, 
since $k$ is at most \eqref{EQminT}, 
the right-hand side 
$d_A^-(v) - (k-1)b(v)$ of \eqref{EQgpoly4} is nonnegative. 
We also define a polytope $P_2 \subseteq \RR^H$
by the following system: 
\begin{alignat}{2}
\label{EQgpoly21}
{}&{} 0 \le x(a) \le 1 					{}&&{} \quad(a \in H), \\
\label{EQgpoly22}
{}&{} x(C) \le |C| - g_2(C) + 1			{}&&{} \quad(C \in \C_2), \\
\label{EQgpoly23}
{}&{} x(C) \ge 1 						{}&&{} \quad(\mbox{$C \in \C_2$, $g_2(C) = k$}), \\
\label{EQgpoly24}
{}&{} x(\delta_H^+(v)) \le d_A^+(v) - (k-1)b(v)	{}&&{} \quad(v \in S). 
\end{alignat}
We now show that $P_1 \cap P_2$ contains an integer vector by Claims \ref{CLgp} and \ref{CLoverk} below. 

\begin{claim}
\label{CLgp}
The polytopes $P_1$ and $P_2$ are generalized polymatroids. 
\end{claim}

\begin{proof}
Here we prove that $P_1$ is a generalized polymatroid. 
In the same manner, 
$P_2$ can be proved to be a generalized polymatroid. 

By following the argument in Schrijver \cite[Theorem 49.14]{Sch03}, 
we obtain from Claims \ref{CLintersecting}--\ref{CLk} that the polytope $P \subseteq \RR^H$ determined by \eqref{EQgpoly1}--\eqref{EQgpoly3} is a generalized polymatroid. 
Here we prove that the addition of the constraint \eqref{EQgpoly4} maintains that 
the determined polytope is a generalized polymatroid. 

Consider the following sequence of 
transformations: 
\begin{align}
\label{EQtrans1}
&{}Q = \{ y \in \RR^T \colon \mbox{$\exists x \in P$, $x(\delta_H^-(v))= y(v)$ for each $v \in T$} \}, \\
\label{EQtrans2}
&{}R = \{ y \in \RR^T \colon \mbox{$y \in Q$, $y(v) \le d_A^-(v) - (k-1)b(v)$ for each $v \in T$} \}, \\
\label{EQtrans3}
&{}P' = \{ x \in \RR^H \colon \mbox{$\exists y \in R$, $y(v) = x(\delta_H^-(v))$ for each $v \in T$} \}. 
\end{align}
It is straightforward to see that $P'=P_1$. 
We complete the proof by showing that the transformations \eqref{EQtrans1}--\eqref{EQtrans3} 
maintain that the polytope is a generalized polymatroid. 

First, 
$Q$ is the aggregation of $P$, 
and hence is a generalized polymatroid (Theorem \ref{THgp}\ref{ENUclosed}). 
Next, 
$R$ is the intersection of $Q$ and a box $[0,d_A^- - (k-1)b]$. 
Hence $R$ is again a generalized polymatroid (Theorem \ref{THgp}\ref{ENUclosed}). 
Finally, $P'$ is the splitting of $R$, 
and hence $P'=P_1$ is a generalized polymatroid as well (Theorem \ref{THgp}\ref{ENUclosed}). 
\end{proof}

By Theorem \ref{THgp}\ref{ENUinteger}, 
the generalized polymatroids $P_1$ and $P_2$ are integer. 
It then follows from Theorem \ref{THgp}\ref{ENUintersection} 
that $P_1 \cap P_2$ is an integer polyhedron. 
In the next claim, 
we show that $P_1 \cap P_2$ is nonempty, 
which certifies that $P_1 \cap P_2$ contains an integer vector. 
Denote by $\bm{1}^H$ the vector in $\RR^H$ each of whose component is one. 

\begin{claim}
\label{CLoverk}
The vector $x^*=\bm{1}^H/k$ belongs to $P_1 \cap P_2$. 
\end{claim}

\begin{proof}
Here we prove $x^* \in P_1$. 
We can prove $x^* \in P_2$
in the same manner. 

It is clear that $x^*$ satisfies \eqref{EQgpoly1}. 
We obtain \eqref{EQgpoly2} as follows: 
\begin{align*}
x^*(C) 
{}&{}= \frac{|C|}{k} = |C| - \frac{k-1}{k}|C| \\
{}&{}\le |C| - \frac{k-1}{k}g_1(C) = |C| - g_1(C) + \frac{1}{k}g_1(C) \\
{}&{}\le |C| - g_1(C) + 1, 
\end{align*}
where the two inequalities follow from \eqref{EQcolorassump}. 

If $g_1(C)=k$, 
then, 
$|C| \ge k$ follows 
from \eqref{EQcolorassump}. 
This implies $x^*(C) \ge 1$, 
and thus \eqref{EQgpoly3} is satisfied. 

Finally, 
for $v \in T$, 
\begin{align*}
(d_A^-(v) - (k-1)b(v)) - x^*(\delta_H^-(v))
{}&{}= (d_A^-(v) - (k-1)b(v)) - \frac{|\delta_H^-(v)|}{k} \\
{}&{}\ge (d_A^-(v) - (k-1)b(v)) - \frac{d_A^-(v)}{k} \\
{}&{}= (k-1)\left(\frac{d_A^-(v)}{k} - b(v)\right) \\
{}&{}\ge 0,
\end{align*}
where the latter inequality follows from 
the definition of $k$, 
i.e.,\ 
$k$ is at most \eqref{EQminT}. 
Thus, 
$x^*$ satisfies \eqref{EQgpoly4}. 
Therefore, we conclude that 
$x^* \in P_1$. 
\end{proof}

Now 
$P_1 \cap P_2$ contains an integer vector $x_1 \in \{0,1\}^H$. 
Denote the arc subset of $H$ whose characteristic vector is $x_1$ by $H_1$.  
By induction, 
we obtain a partition $\{H_1,\ldots, H_k\}$ of 
$H=A[S,T]$ 
satisfying 
\begin{align}
\label{EQcolored}
&{}|\{ j \in [k] \colon C \cap H_j \neq \emptyset \}| \ge g_i(C) 
\quad \mbox{for each $C \in \C_i$ and each $i=1,2$}, \\
\label{EQatmostkT}
&{}
d_{H_j}^-(v)\le d_A^-(v) - (k-1)b(v) \le b(v) \quad \mbox{for each $v \in T$ and $j \in [k]$}, \\
\label{EQatmostkS}
&{}
d^+_{H_j}(u)
\le d_A^+(u) - (k-1)b(v) \le b(v)\quad \mbox{for each $u \in S$ and $j \in [k]$} .
\end{align}

We complete the proof by showing that $D[T]$ has disjoint $b$-branchings $B_1,\ldots, B_k$ and 
$D[S]$ has disjoint $b$-cobranchings $B^*_1,\ldots, B^*_k$ such that 
$B_j^* \cup H_j \cup B_j$ ($j\in [k]$) is a $b$-bibranching. 
Let $U$ be an arbitrary nonempty subset of $T$. 
It follows from the definition \eqref{EQg1} of $g_1$ that 
$g_1(\delta_H^-(U)) \ge k - d_{A[T]}^-(U)$. 
Combined with 
\eqref{EQcolored}, 
this implies that 
\begin{align}
\label{EQpink}
|\{ j \in [k] \colon \delta_{H_j}(U) \neq \emptyset \}| 
\ge g_1(\delta_H^-(U)) \ge k - d_{A[T]}^-(U).
\end{align}
For $j \in [k]$, 
define $b_j \in \ZZ_+^T$ by 
\begin{align*}
b_j(v) = b(v) - d_{H_j}^-(v) \quad (v \in T). 
\end{align*}
By \eqref{EQatmostkT}, 
we have that $b_j(v) \ge 0$ for each $j\in [k]$ and each $v \in T$. 
It then follows that 
\begin{align*}
\sum_{j=1}^k b_j(v) 
= 
\sum_{j=1}^k (b(v) - d_{H_j}^-(v)) 
=
k \cdot b(v) - d_H^-(v)
\le 
d_A^-(v) - d_H^-(v)  
=
d_{A[T]}^-(v), 
\end{align*}
where the inequality follows from \eqref{EQminT}. 

Moreover, 
by \eqref{EQpink}, 
\begin{align*}
d_{A[T]}^-(U) 
{}&{}\ge k - |\{ j \in [k] \colon \delta_{H_j}^-(U) \neq \emptyset \}|\\
{}&{}= |\{ j \in [k] \colon \delta_{H_j}^-(U)   = \emptyset \}|\\
{}&{}= |\{ j \in [k] \colon b_j(U) = b(U) \neq 0 \}|.
\end{align*}
Thus, 
by Theorem \ref{THMbbpacking}, 
$D[T]$ has disjoint $b$-branchings $B_1, \ldots, B_k$ such that $d_{B_j}^-=b_j$ for each $j \in [k]$. 
In the same manner, 
we can also see that $D[S]$ has disjoint $b$-cobranchings $B^*_1, \ldots, B^*_k$ such that $d_{B_j}^+=b - d^+_{H_j}$ for each $j \in [k]$. 
We now have that 
$B^*_j \cup H_j \cup B_j$ ($j \in [k]$) are 
disjoint $b$-bibranchings in $D$. 
\end{proof}

The integer decomposition property of the $b$-bibranching polytope 
is a direct consequence of Theorem \ref{THMbbbpacking}. 
A polytope $P \subseteq \RR^A$ has the \emph{integer decomposition property} if, 
for an arbitrary positive integer $k$, 
an arbitrary integer vector $x \in kP$ can be represented by 
the sum of $k$ integer vectors in $P$, 
where $kP=\{ x \in \RR^A \colon x = kx', x' \in P \}$. 

\begin{corollary}
The $b$-bibranching polytope has the integer decomposition property.
\end{corollary}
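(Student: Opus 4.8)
The plan is to deduce the integer decomposition property from the packing theorem (Theorem~\ref{THMbbbpacking}) by the standard device of taking parallel copies. Fix a positive integer $k$ and an integer vector $x \in kP$, where $P$ denotes the $b$-bibranching polytope described by \eqref{EQpoly1}--\eqref{EQpoly5} in Corollary~\ref{CORbbbpoly}; thus $x/k \in P$, and in particular $0 \le x(a) \le k$ for every $a \in A$. I want to write $x = \sum_{j=1}^{k} x_j$ with each $x_j \in P \cap \ZZ^A$, i.e.\ with each $x_j$ the characteristic vector of a $b$-bibranching. To this end I would consider the digraph $D_x=(V,A_x)$ obtained by replacing each arc $a$ with $x(a)$ parallel copies, keeping $S$, $T$, and $b$ unchanged. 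A decomposition of $x$ as above corresponds exactly to a partition of $A_x$ into $k$ $b$-bibranchings, each of which uses at most one copy of every original arc.

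First I would check that $D_x$ satisfies the hypothesis of Theorem~\ref{THMbbbpacking} with value at least $k$. For $v \in T$ we have $d^-_{A_x}(v)=x(\delta^-(v)) \ge k\,b(v)$ by \eqref{EQpoly1}, so $\lfloor d^-_{A_x}(v)/b(v)\rfloor \ge k$; symmetrically $\lfloor d^+_{A_x}(v)/b(v)\rfloor \ge k$ for $v \in S$ by \eqref{EQpoly2}; and every bicut $C$ has size $x(C) \ge k$ in $D_x$ by \eqref{EQpoly3}. Hence the minimum of \eqref{EQminT}--\eqref{EQminC} evaluated in $D_x$ is at least $k$, and Theorem~\ref{THMbbbpacking} produces $k$ arc-disjoint $b$-bibranchings $B_1,\ldots,B_k$ in $D_x$. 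To turn a packing into a cover I would invoke the monotonicity of the defining conditions \eqref{EQbbbdef1}--\eqref{EQbbbdef4}: reachability from $S$, reachability to $T$, and the degree lower bounds are all preserved when arcs are added, so every superset of a $b$-bibranching is again a $b$-bibranching, and the leftover copies can be absorbed so that $B_1,\ldots,B_k$ partition all of $A_x$.

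The step I expect to be the real obstacle is guaranteeing that this partition can be chosen so that no class contains two copies of the same original arc; only then does the image $x_j$ of $B_j$ lie in $\{0,1\}^A$ and qualify as a summand in $P \cap \ZZ^A$. When $b\equiv 1$ this is automatic, since the in-degree bound forces $d^-_{B_j}(v)\le 1$ and the equality constraints force coverage; for general $b$ a single $b$-bibranching may legitimately use two parallel copies of one arc, so the bare packing does not suffice and the dumping of leftover copies into one class can create repeats. I would handle this by re-examining the proof of Theorem~\ref{THMbbbpacking} and prescribing, for each $j$, the degrees $d^-_{B_j}(v)$ ($v\in T$) and $d^+_{B_j}(v)$ ($v\in S$) to be \emph{balanced}, with $\sum_{j} d^-_{B_j}(v)=d^-_{A_x}(v)$ (and symmetrically on $S$) so that coverage is forced, while spreading the $x(a)\le k$ copies of each arc one per class; the inequality $x(a)\le k$ is exactly what makes such a spread feasible. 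Verifying that these prescribed degree vectors still meet the packing conditions \eqref{EQpackingdeg}--\eqref{EQpackingcut} is the technical heart of the argument, after which $\sum_{j=1}^{k} x_j = x$ yields the desired decomposition and hence the integer decomposition property.
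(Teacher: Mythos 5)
Your first two paragraphs reproduce the paper's proof essentially verbatim: form the multiset $A_x$ with multiplicity $x(a)$, check that each of \eqref{EQminT}--\eqref{EQminC} evaluated in $(V,A_x)$ is at least $k$, apply Theorem~\ref{THMbbbpacking}, and absorb the leftover copies using the fact that a superset of a $b$-bibranching is a $b$-bibranching. The paper stops there. The obstacle you raise in your last paragraph is therefore not addressed in the paper at all, and it is a legitimate worry: a class of the partition of $A_x$ may contain two copies of the same arc of $A$, in which case its image in $\ZZ^A$ has an entry equal to $2$ and is not an integer point of $P$ (which lies in $[0,1]^A$ by \eqref{EQpoly5}), so it cannot serve as a summand in the decomposition.

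However, the repair you sketch does not close this gap, and you yourself leave its ``technical heart'' unverified. Prescribing balanced per-class degree vectors with $\sum_j d^-_{B_j}(v)=d^-_{A_x}(v)$ and verifying \eqref{EQpackingdeg}--\eqref{EQpackingcut} constrains only degrees and cuts; Theorems~\ref{THMbbpacking} and \ref{THMbbbpacking} then deliver a packing meeting those prescriptions with no control whatsoever over per-arc multiplicities. A minimal example: $S=\{u\}$, $T=\{v\}$, $b(u)=b(v)=2$, $A=\{a_1,a_2,a_3\}$ all oriented from $u$ to $v$, $k=2$, $x=(2,1,1)$. The only balanced prescription gives each class in-degree $2$ at $v$ and out-degree $2$ at $u$, and the partition of $A_x$ into $\{a_1^{(1)},a_1^{(2)}\}$ and $\{a_2,a_3\}$ satisfies it together with all packing conditions, yet puts both copies of $a_1$ into one class; the valid decomposition $(1,1,0)+(1,0,1)$ exists, but nothing in the cited theorems produces it. So ``spreading one copy per class'' is an additional constraint that must itself be enforced inside the proof of Theorem~\ref{THMbbbpacking}---for instance, on the $A[S,T]$ part by running the generalized-polymatroid argument on the original arc set intersected with the box $[0,\min(1,x(a))]$ (Theorem~\ref{THgp}\ref{ENUclosed}), and on $D[T]$ and $D[S]$ by a capacitated analogue of Theorem~\ref{THMbbpacking}, which the paper does not provide---or else by an explicit exchange argument among parallel copies. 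Until such an argument is carried out, your proposal (and, to be fair, the paper's own one-line conclusion as written) is incomplete at precisely this point.
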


\begin{proof}
Let $P \subseteq \RR^A$ denote the $b$-bibranching polytope 
and let $k$ be an arbitrary positive integer. 
By Corollary \ref{CORbbbpoly}, 
$kP$ is described as 
\begin{alignat}{2}
\label{EQkP1}
&{}x(\delta^-(v)) \ge k\cdot b(v) \quad{}&&{} \mbox{for each $v \in T$}, \\
&{}x(\delta^+(v)) \ge k\cdot b(v) {}&&{} \mbox{for each $v \in S$}, \\
&{}x(C) \ge k {}&&{} \mbox{for each bicut $C$}, \\
\label{EQkP4}
&{} 0 \le x(a) \le k {}&&{} \mbox{for each $a \in A$}. 
\end{alignat}
Let $x$ be an integer vector in $kP$ 
and 
let $A_x$ be a multiset of arcs such that $a \in A$ is contained in $A_x$ with multiplicity $x(a)$. 
Since $x$ satisfies \eqref{EQkP1}--\eqref{EQkP4}, 
it follows from Theorem \ref{THMbbbpacking} that 
the digraph $(V,A_x)$ contains $k$ disjoint $b$-bibranchings. 
Since a superset of a $b$-bibranching is a $b$-bibranching, 
$A_x$ can be partitioned into $k$ $b$-bibranchings. 
We thus conclude that 
$P$ has the integer decomposition property. 
\end{proof}

\section{M-convex submodular flow formulation}
\label{SECm}

In this section, 
we present an $\mathrm{M}^\natural$-convex submodular flow formulation of the shortest $b$-bibranching problem. 
We remark that this formulation implies a combinatorial polynomial algorithm. 

We first show an extension of Theorem \ref{THMMconv}: an $\mathrm{M}^\natural$-convex function is derived from $b$-branchings. 
Let $D=(V,A)$ be a digraph and 
$w \in \RR_+\sp{A}$ represent the arc weights. 

Define a function $g \colon \ZZ^V \to \overline{\ZZ}$  
by 
\begin{align}
\label{EQdomg}
&{}\dom g = \{ x \in \ZZ_+^V \colon \mbox{$D$ has a $b$-branching $B$ with $d_B^- + x \ge b$} \}, \\
\label{EQvalg}
&{}g(x) = 
\begin{cases}
\min\{ w(B) \colon \mbox{$B$ is a $b$-branching, $d_B^- + x \ge b$} \} & (x \in \dom g), \\
+\infty & (x \not \in \dom g). 
\end{cases}
\end{align}

\begin{theorem}
\label{THMg}
The function $g \colon \ZZ^V \to \overline{\ZZ}$  defined by 
\eqref{EQdomg} and \eqref{EQvalg} is an $\mathrm{M}^\natural$-convex function. 
\end{theorem}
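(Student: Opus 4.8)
The plan is to derive the $\mathrm{M}^\natural$-convexity of $g$ from an \emph{exchange property of $b$-branchings}, exactly mirroring the way the $\mathrm{M}^\natural$-convexity of $f_T$ (Theorem \ref{THMMconv}) is obtained from the exchange property of branchings \cite{Sch00}. Indeed, $g$ is the $b$-branching analogue of $f_T$: writing $z=d_B^-$ for the indegree vector of a $b$-branching $B$, the constraint $d_B^-+x\ge b$ in \eqref{EQdomg}--\eqref{EQvalg} generalizes the constraint $x\ge\chi_{R(B)}=\mathbf{1}-d_B^-$ used in \eqref{EQdomfT}--\eqref{EQvalfT} (for a branching $d_B^-\in\{0,1\}$, so $\chi_{R(B)}=\mathbf{1}-d_B^-$, and $x\ge\chi_{R(B)}$ reads $d_B^-+x\ge\mathbf{1}=b$).

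The key lemma I would isolate is that the function
\begin{align*}
h(z)=\min\{w(B)\colon \mbox{$B$ is a $b$-branching with $d_B^-=z$}\}\quad(z\in\ZZ^V),
\end{align*}
with $h(z)=+\infty$ when no $b$-branching has indegree vector exactly $z$, is $\mathrm{M}^\natural$-convex. This is the $b$-branching exchange property, and I would prove it from the packing theorem for $b$-branchings (Theorem \ref{THMbbpacking}), in the same spirit that the branching exchange property \cite{Sch00} follows from Edmonds' disjoint branchings theorem \cite{Edm73}. Concretely, given $b$-branchings $B_1,B_2$ and $u\in\suppp(d_{B_1}^--d_{B_2}^-)$, one uses the degree/cut characterization of Theorem \ref{THMbbpacking} to produce $b$-branchings realizing the indegree vectors obtained by decreasing the indegree at $u$ by one in $B_1$ and increasing it at $u$ by one in $B_2$ (possibly compensated at some $v\in\suppm(d_{B_1}^--d_{B_2}^-)$), without increasing the total weight. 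Note $\dom h\subseteq\{z\in\ZZ^V\colon 0\le z\le b\}$, since \eqref{EQbbdeg} forces $d_B^-\le b$.

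Granting the lemma, $g$ is obtained from $h$ by operations preserving $\mathrm{M}^\natural$-convexity. Since $w\ge 0$, minimizing over all $b$-branchings with $d_B^-\ge b-x$ gives
\begin{align*}
g(x)=\min\{h(z)\colon z\ge b-x\}=\bar h(b-x),\qquad \bar h(t):=\min\{h(z)\colon z\ge t\}.
\end{align*}
Here $\bar h$ is the infimal convolution of $h$ with the indicator function of the box $\{y\colon y\le 0\}$, and is therefore $\mathrm{M}^\natural$-convex. Finally $x\mapsto\bar h(b-x)$ is the composition of $\bar h$ with the translation-and-sign-inversion $x\mapsto b-x$: translation clearly preserves $\mathrm{M}^\natural$-convexity, and sign inversion $f(\cdot)\mapsto f(-\cdot)$ does as well, because the exchange axiom \eqref{EQM1}--\eqref{EQM2} is self-dual, i.e., it also holds starting from any $v\in\suppm(x-y)$ (apply it to the swapped pair $(y,x)$). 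One checks that the effective domain comes out correctly: if $x(u)<0$ then $z\ge b-x$ forces $z(u)>b(u)$, which is infeasible, so $g(x)=+\infty$, matching $\dom g\subseteq\ZZ_+^V$ in \eqref{EQdomg}.

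The main obstacle is the exchange-property lemma itself --- promoting the \emph{feasibility} statement of Theorem \ref{THMbbpacking} (existence of disjoint $b$-branchings with prescribed indegrees) into a \emph{weighted} single-unit exchange between two given $b$-branchings with weight monotonicity, while correctly handling the two cases (pure exchange at $u$ versus compensated exchange at some $v$) together with the sparsity constraint \eqref{EQbbsp}. By contrast, the reduction in the previous paragraph uses only standard closure properties of $\mathrm{M}^\natural$-convex functions. As an alternative to this operation-based reduction, one could verify \eqref{EQM1}--\eqref{EQM2} for $g$ directly: for given $x,y$ choose weight-minimizing $b$-branchings $B,C$ with $d_B^-,d_C^-$ as small as possible, and apply the $b$-branching exchange to their indegree vectors. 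The delicate point there is ensuring that this minimal choice makes $u\in\suppp(x-y)$ yield $d_C^-(u)>d_B^-(u)$, and that the compensating index $v$ lies in $\suppm(x-y)$, so that the exchange applies in the right direction and certifies $g(x-\chi_u+\chi_v)+g(y+\chi_u-\chi_v)\le g(x)+g(y)$.
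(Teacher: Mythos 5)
Your proposal is correct, and its technical core coincides with the paper's, but the final assembly is genuinely different. The ``main obstacle'' you isolate --- $\mathrm{M}^\natural$-convexity of $h(z)=\min\{w(B)\colon \mbox{$B$ a $b$-branching, } d_B^-=z\}$, which is the paper's function $f$ of Lemma \ref{LEMf} up to the reflection $f(x)=h(b-x)$ --- is exactly what the paper establishes, and by the route you sketch: a single-unit exchange lemma for $b$-branchings (Lemma \ref{LEMexc}), proved by applying Theorem \ref{THMbbpacking} to the arcs $B_1\cup B_2$ (Lemma \ref{LEMtwopartition}), with weight preservation coming from $B_1'\cup B_2'=B_1\cup B_2$ and $B_1'\cap B_2'=B_1\cap B_2$. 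Where you diverge is in passing from the equality-constrained function to $g$: the paper verifies \eqref{EQM1}--\eqref{EQM2} for $g$ directly, by a three-case analysis (Case (i): $y(u)\ge b(u)$; Case (ii): $x(u)\ge b(u)+1$ and $y(u)\le b(u)-1$; Case (iii): $y(u)<x(u)\le b(u)$), invoking Lemma \ref{LEMexc} on $b$-branchings attaining $g(x)$ and $g(y)$ in the last case; you instead write $g(x)=\bar h(b-x)$ with $\bar h$ the infimal convolution of $h$ with the indicator of the nonpositive orthant, and invoke closure of $\mathrm{M}^\natural$-convexity under infimal convolution, translation, and sign inversion (your self-duality argument for sign inversion is correct, as is your check that the effective domain comes out right). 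What each approach buys: the paper's route is self-contained, needing nothing beyond the exchange lemma itself, whereas yours leans on the standard but nontrivial closure theorem for infimal convolution. In return, your reduction mechanically sidesteps precisely the ``delicate point'' you flag at the end, and that point is real: in the direct verification one must ensure the exchange lemma is applied in the right direction (its hypothesis $d_{B_1}^-(s)\lneqq d_{B_2}^-(s)$ must actually hold) and that the compensating vertex lands in $\suppm(x-y)$; this requires either taking indegree-minimal optimal $b$-branchings (so that $d_B^-=\max(b-x,0)$ componentwise, using $w\ge 0$) or first disposing of the subcase where $B_x$ already certifies $g(x-\chi_u)\le g(x)$, and the paper's Case (iii) is quite terse on exactly this step. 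So the heart of the matter (the exchange property from the packing theorem) is the same in both, while your operations-based final step is a legitimate and arguably more robust alternative to the paper's case analysis.
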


To prove Theorem \ref{THMg}, 
we define another function $f \colon \ZZ^V \to \overline{\ZZ}$, 
which is derived from $b$-branchings more directly. 
Define $f \colon \ZZ^V \to \overline{\ZZ}$ by 
\begin{align}
\label{EQdomf}
&{}\dom f = \{x \in \ZZ_+^V \colon \mbox{$D$ has a $b$-branching $B$ with $d_B^- + x = b$}\}, \\ 
\label{EQvalf}
&{}f(x) = 
\begin{cases}
\min\{ w(B) \colon \mbox{$B$ is a $b$-branching, $d_B^- + x = b$} \} & (x \in \dom f), \\
+\infty & (x \not \in \dom f). 
\end{cases}
\end{align}

\begin{lemma}
\label{LEMf}
The function $f \colon \ZZ^V \to \overline{\ZZ}$  defined by 
\eqref{EQdomf} and \eqref{EQvalf} is an $\mathrm{M}^\natural$-convex function. 
\end{lemma}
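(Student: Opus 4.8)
The plan is to prove the $\mathrm{M}^\natural$-convexity of $f$ directly from the exchange property of $b$-branchings, which the paper states follows from the packing theorem (Theorem \ref{THMbbpacking}). The function $f$ records, for each indegree-deficiency vector $x = b - d_B^-$, the minimum weight of a $b$-branching $B$ attaining that exact indegree sequence $d_B^- = b - x$. So $\dom f$ is parametrized by the root-deficiency profile, and establishing \eqref{EQM1}--\eqref{EQM2} amounts to a weighted single-element (or single-pair) exchange between two $b$-branchings realizing indegree sequences that differ at the chosen coordinate.

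First I would fix $x, y \in \dom f$ together with optimal $b$-branchings $B_x, B_y$ achieving $f(x), f(y)$, so that $d_{B_x}^- = b - x$ and $d_{B_y}^- = b - y$. For $u \in \suppp(x - y)$ we have $x(u) > y(u)$, i.e.\ $d_{B_x}^-(u) < d_{B_y}^-(u)$, meaning $B_y$ uses strictly more arcs entering $u$ than $B_x$ does. The key step is to invoke the exchange property of $b$-branchings to transfer arc-mass so as to \emph{increase} the indegree at $u$ in $B_x$ by one while either leaving all other indegrees unchanged (yielding the single-swap inequality \eqref{EQM1}, where $x - \chi_u$ and $y + \chi_u$ are the new deficiency vectors) or compensating by \emph{decreasing} the indegree at some $v \in \suppm(x-y)$ (yielding the paired inequality \eqref{EQM2}). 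The weight bookkeeping must confirm that the sum of weights of the two new $b$-branchings does not exceed $w(B_x) + w(B_y)$, which is exactly what a weight-respecting exchange property delivers; summing the two optimal values and using that the exchanged pair is feasible for $f$ at the shifted arguments gives the desired inequality.

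The main obstacle will be pinning down the precise form of the exchange property for $b$-branchings and verifying that it is \emph{simultaneous and weight-controlled}: I need a statement saying that if two $b$-branchings have indegree sequences differing at $u$, then arcs can be rerouted to shift one unit of indegree at $u$ between them while keeping both sets $b$-branchings and not increasing total weight. For ordinary branchings this is the standard exchange property cited from \cite{Sch00}, itself underpinned by Edmonds' disjoint-branching theorem; here I would appeal to the $b$-analogue that the paper announces (derived from Theorem \ref{THMbbpacking}), and then translate its matroidal/arc-exchange conclusion into the language of the deficiency vectors $x = b - d_B^-$. Care is needed because $\dom f$ is defined by an \emph{equality} $d_B^- + x = b$, so the exchange must preserve the exact indegree profile away from the coordinates $u$ (and possibly $v$) being moved; this is where the single-swap versus paired-swap dichotomy of $\mathrm{M}^\natural$-convexity arises naturally, matching the two arcs whose endpoints change under a rerouting.

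Once Lemma \ref{LEMf} is in hand, Theorem \ref{THMg} should follow quickly by observing that $g$ is the infimal convolution (monotone relaxation) of $f$ with the indicator of $\ZZ_+^V$, i.e.\ $g(x) = \min\{ f(x') : 0 \le x' \le x \}$, since relaxing the equality $d_B^- + x = b$ to the inequality $d_B^- + x \ge b$ amounts to allowing the deficiency to be any vector dominated coordinatewise by $x$; because $\mathrm{M}^\natural$-convexity is preserved under such monotone closure operations, $g$ inherits $\mathrm{M}^\natural$-convexity from $f$. Thus I would carry out the steps in the order: (i) state and justify the $b$-branching exchange property from Theorem \ref{THMbbpacking}; (ii) translate it to deficiency vectors and prove Lemma \ref{LEMf}; (iii) express $g$ via the monotone relaxation of $f$ and deduce Theorem \ref{THMg}.
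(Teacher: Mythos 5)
Your outline follows the same route as the paper: fix optimal $b$-branchings $B_x$, $B_y$ attaining $f(x)$, $f(y)$, observe that $u \in \suppp(x-y)$ means $d_{B_x}^-(u) < d_{B_y}^-(u)$, and apply a weight-preserving exchange property of $b$-branchings whose single-swap/paired-swap dichotomy produces exactly \eqref{EQM1} or \eqref{EQM2}. The surrounding bookkeeping is correctly identified: the compensating vertex must lie in $\suppm(x-y)$, and the total weight of the two new $b$-branchings must not exceed $w(B_x)+w(B_y)$ (in the paper this holds with equality because the exchange preserves both the union and the intersection of the two arc sets). Your closing remark that Theorem \ref{THMg} would then follow from Lemma \ref{LEMf} by monotone relaxation, $g(x)=\min\{f(x')\colon x'\le x\}$, is a legitimate alternative to the paper's direct case analysis, though it concerns a different statement.

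The genuine gap is that the exchange property itself is invoked rather than proved, and it is the whole content of the lemma. It cannot be cited: the literature contains only the exchange property for ordinary branchings \cite{Sch00}, and the $b$-analogue is precisely what this paper sets out to establish --- the ``announcement'' in the introduction refers to a result proved later in this same paper (Lemma \ref{LEMexc}), so appealing to it is circular. Deriving it from Theorem \ref{THMbbpacking} is not a routine translation; the paper needs two nontrivial steps. First, a repartitioning lemma (Lemma \ref{LEMtwopartition}): if an arc set partitioned into two $b$-branchings has $b_1'+b_2'=d_A^-$ with $b_1',b_2'\le b$, then a repartition with prescribed indegree vectors $b_1'$, $b_2'$ exists if and only if $b_i'(X) \lneqq b(X)$ for every source component $X$; verifying hypothesis \eqref{EQpackingcut} of Theorem \ref{THMbbpacking} is exactly where the source-component analysis enters. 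Second, the dichotomy in your step is governed by this obstruction: the single swap $b_1'=d_{B_1}^-+\chi_s$, $b_2'=d_{B_2}^--\chi_s$ is infeasible precisely when the strong component $X$ containing $s$ is a source component with $d_{B_1}^-(X)=b(X)-1$. In that case one must argue that $d_{B_1}^-(v)=b(v)$ for all $v\in X\setminus\{s\}$ and $d_{B_2}^-(s)=b(s)$, and then use that $X$ is a source component (so $\sum_{v\in X}d_{B_2}^-(v)\le b(X)-1$) to find $t\in X\setminus\{s\}$ with $d_{B_2}^-(t) < b(t) = d_{B_1}^-(t)$; this last inequality is what places $t$ in $\suppm(x-y)$ and makes the paired swap both feasible and of the form required by \eqref{EQM2}. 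Without this argument, your step (i), ``state and justify the $b$-branching exchange property,'' is a placeholder for the entire proof rather than a proof.
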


Lemma \ref{LEMf} follows from the exchange property of $b$-branchings (Lemma \ref{LEMexc}). 
Call a strong component $X$ in $D$ a \emph{source component} 
if $\delta^-_A(X) = \emptyset$.

\begin{lemma}
\label{LEMtwopartition}
Let $D=(V,A)$ be a digraph, 
and 
$b\in \ZZ_{++}^V$ be a positive integer valued vector on $V$ such that 
$A$ can be partitioned into two $b$-branchings $B_1,B_2 \subseteq A$. 
Let $b_1',b_2' \in \ZZ^V$ satisfy 
\begin{align}
\label{EQtwopartition}
{}&{}b_1' + b_2' = d_A^-, \\
{}&{}b_1' \le b, \quad b_2' \le b. 
\end{align}
Then, 
$A$ can be partitioned into two $b$-branchings $B_1',B_2' \subseteq A$ 
with 
$d_{B'_1}^- = b_1'$ and 
$d_{B'_2}^- = b_2'$ 
if and only if 
\begin{align}
\label{EQexcsource}
b_1'(X) \lneqq b(X), \quad b_2'(X) \lneqq b(X) 
\quad \mbox{for each source component $X$ in $D$.}
\end{align}
\end{lemma}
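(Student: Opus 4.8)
The plan is to prove Lemma~\ref{LEMtwopartition} by leveraging the packing theorem for $b$-branchings (Theorem~\ref{THMbbpacking}) with $k=2$. Since $A$ already partitions into two $b$-branchings $B_1,B_2$, every vertex $v$ satisfies $d_A^-(v)=d_{B_1}^-(v)+d_{B_2}^-(v)\le 2b(v)$, and \eqref{EQtwopartition} together with $b_1'+b_2'=d_A^-$ guarantees the degree condition \eqref{EQpackingdeg} of Theorem~\ref{THMbbpacking} is met. So the real content is to show that, under the hypotheses, the cut condition \eqref{EQpackingcut} of Theorem~\ref{THMbbpacking}, namely
\begin{align}
\notag
d_A^-(X) \ge |\{ j \in \{1,2\} \colon b_j'(X) = b(X) \neq 0 \}| \quad (\emptyset \neq X \subseteq V),
\end{align}
is equivalent to the source-component condition \eqref{EQexcsource}. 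First I would establish the easy direction: if $A$ partitions into $b$-branchings $B_1',B_2'$ with the prescribed indegrees, then for any source component $X$ we have $\delta_A^-(X)=\emptyset$, so $b_1'(X)+b_2'(X)=d_A^-(X)-d_A^-(X)$ counts only arcs inside $X$; since each $B_i'$ restricted to $D[X]$ is a $b$-branching and hence satisfies $|B_i'[X]|\le b(X)-1$, summing the within-$X$ indegrees gives $b_i'(X)=|B_i'[X]|\le b(X)-1<b(X)$, yielding \eqref{EQexcsource}.

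For the converse, the key step is to verify that \eqref{EQexcsource} implies the cut condition \eqref{EQpackingcut} for \emph{all} nonempty $X\subseteq V$, not merely source components. The plan is a reduction: I expect that the cut condition can only possibly fail when $d_A^-(X)$ is small, and the extreme case $d_A^-(X)=0$ forces $X$ to be a union of source components (indeed a single source component after a strong-connectivity/shrinking argument). Concretely, I would argue that if $d_A^-(X)=0$ then no arc enters $X$ from outside, so $X$ is a source side; analyzing the strong components contained in $X$, the condition $b_i'(X)=b(X)$ would have to hold on some source component, contradicting \eqref{EQexcsource}. When $d_A^-(X)\ge 1$, I would show the right-hand side of \eqref{EQpackingcut} is at most $1$ unless both $b_1'(X)=b(X)$ and $b_2'(X)=b(X)$ hold simultaneously; but $b_1'(X)+b_2'(X)=d_A^-(X)+|B_1[X]|+|B_2[X]|$ after accounting for arcs entering versus lying inside $X$, and the sparsity bounds $|B_i[X]|\le b(X)-1$ on the two given branchings $B_1,B_2$ cap the total, ruling out the simultaneous equality whenever $d_A^-(X)\ge 2$. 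The bookkeeping relating $b_i'(X)$, $d_A^-(X)$, and the within-$X$ arc counts via the identity $b_i'(V)=d_A^-(V)$ restricted appropriately is where the argument lives.

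The main obstacle I anticipate is the case $d_A^-(X)=1$: here the right-hand side of \eqref{EQpackingcut} is permitted to be $1$, so I must show it is never $2$, i.e.\ that $b_1'(X)=b_2'(X)=b(X)$ cannot both hold. The plan is to sum to get $2b(X)=b_1'(X)+b_2'(X)$, then express $b_i'(X)$ as $d_{\delta^-_A(X)}^{\text{into }i}+|B_i'[X]|$-type quantities and use that the single entering arc can belong to at most one of the two branchings, so at most one of $b_1'(X),b_2'(X)$ can reach $b(X)$ while the other is capped at $b(X)-1$ by the internal sparsity bound inherited from the hypothesis that $B_1,B_2$ are $b$-branchings. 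Establishing this cleanly may require shrinking $X$ to its source components or an induction on the strong-component structure, and I would isolate that combinatorial accounting as the technical heart. Once the cut condition is verified for all $X$, Theorem~\ref{THMbbpacking} directly produces the desired partition $B_1',B_2'$, completing the proof.
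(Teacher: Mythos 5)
Your proposal is correct and takes essentially the same route as the paper's own proof: reduce to Theorem~\ref{THMbbpacking} with $k=2$, note that \eqref{EQpackingdeg} holds with equality by \eqref{EQtwopartition}, and verify \eqref{EQpackingcut} by the same two-case analysis (a cut $X$ with $d_A^-(X)=0$ contains a source component of $D$, so $b_i'(X)=b(X)$ is impossible by \eqref{EQexcsource} and $b_i'\le b$ pointwise; and $b_1'(X)=b_2'(X)=b(X)$ simultaneously forces $d_A^-(X)\ge 2$ via the sparsity bounds $|B_i[X]|\le b(X)-1$ on the given branchings). The one slip is the phrase ``ruling out the simultaneous equality whenever $d_A^-(X)\ge 2$,'' which is backwards---it must be ruled out when $d_A^-(X)\le 1$---but your final paragraph carries out exactly the right counting for that case, so the argument as a whole stands.
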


\begin{proof}
Necessity is easy, 
and here we prove sufficiency 
by Theorem \ref{THMbbpacking}. 
Since 
\eqref{EQpackingdeg} follows from \eqref{EQtwopartition},  
it suffices to show \eqref{EQpackingcut}, 
i.e., 
\begin{align}
\label{EQexccut}
d_A^-(U) \ge | \{i \in \{1,2\} \colon b'_i(U) = b(U) \neq 0\}| \ \mbox{for each nonempty subset $U$ of $V$.}
\end{align}
Suppose to the contrary that \eqref{EQexccut} does not hold for some nonempty set $U \subseteq V$. 
Note that $b(U) \neq 0$ holds for every $U\neq \emptyset$ since $b \in \ZZ_{++}^V$.

If $| \{i \in \{1,2\} \colon b'_i(U) = b(U) \neq 0\}| \le 1$, 
then 
it should hold that 
\begin{align}
\label{EQzeroindeg}
{}&{}d_A^-(U) = 0 ,\\
\label{EQexcRHS1}
{}&{}| \{i \in \{1,2\} \colon b'_i(U) = b(U)\neq 0\}| = 1. 
\end{align}
By \eqref{EQzeroindeg}, 
$U$ contains a source component $X$. 
Then, 
by \eqref{EQexcsource}, 
it holds that 
$b_1'(U) \lneqq b(U)$ and 
$b_2'(U) \lneqq b(U)$, 
contradicting to \eqref{EQexcRHS1}.

If $| \{i \in \{1,2\} \colon b'_i(U) = b(U)\neq 0\}| = 2$, 
it follows that 
$\sum_{v\in U}d_{B_1}^-(v) = \sum_{v\in U}d_{B_2}^-(v) = b(U)$. 
This implies that 
$\delta_{B_1}^-(U) \neq \emptyset$ and 
$\delta_{B_2}^-(U) \neq \emptyset$. 
We thus obtain $d_A^-(U) \ge 2$, 
contradicting that \eqref{EQexccut} does not hold. 
Therefore, 
we have shown \eqref{EQexccut}. 
\end{proof}

\begin{lemma}
\label{LEMexc}
Let $D=(V,A)$ be a digraph 
and 
$b\colon V \to \ZZ_{++}$. 
Let $B_1, B_2 \subseteq A$ be $b$-branchings, 
and let a vertex $s \in V$ satisfy 
$d_{B_1}^-(s) \lneqq d_{B_2}^-(s)$. 
Then, 
$D$ has $b$-branchings $B_1', B_2' \subseteq A$ satisfying the following: 
\begin{enumerate}
\item
	\label{ENUcup}
	$B_1' \cup B_2' = B_1 \cup B_2$, 
\item
	\label{ENUcap}
	$B_1' \cap B_2' = B_1 \cap B_2$, 
\item
	\label{ENUstep}
	at least one of {\rm (a)} and {\rm (b)} below holds: 
	\begin{enumerate}
	\item
	\label{ENU1step}
	$d_{B_1'}^- =d_{B_1}^- + \chi_s$ and $d_{B_2'}^- =d_{B_2}^- - \chi_s$, 
	\item
	\label{ENU2step}
		there exists $t \in V$ such that 
		$d_{B_2}^-(t) \lneqq d_{B_1'}^-(t)$, 
		$d_{B_1'}^- =d_{B_1}^- + \chi_s - \chi_t$, and $d_{B_2'}^- =d_{B_2}^- - \chi_s + \chi_t$. 
	\end{enumerate}
\end{enumerate}
\end{lemma}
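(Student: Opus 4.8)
The plan is to reduce the exchange property to the two-branching packing criterion established in Lemma \ref{LEMtwopartition}. First I would restrict attention to the subgraph $(V, B_1 \cup B_2)$ and set $b' = d_{B_1\cup B_2}^-$, noting that $B_1$ and $B_2$ partition this arc set into two $b'$-branchings (each is a $b$-branching, and on each vertex the combined indegree is exactly $b'$). The common arcs $B_1 \cap B_2$ must be kept fixed in both output branchings by properties \ref{ENUcup} and \ref{ENUcap}, so I would contract or fix them and work with the remaining arcs; equivalently, I would seek a repartition $\{B_1', B_2'\}$ of $B_1 \cup B_2$ preserving $B_1 \cap B_2$, which amounts to repartitioning the symmetric-difference arcs. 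Throughout I apply Lemma \ref{LEMtwopartition} with $b$ replaced by $b'$, since $B_1, B_2$ are $b'$-branchings covering all of $b'$.

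The core idea is to move exactly one unit of indegree from $s$ to the $B_1$-side, i.e.\ to aim for target indegree vectors of the form $d_{B_1}^- + \chi_s$ and $d_{B_2}^- - \chi_s$ (case \ref{ENU1step}). I would first attempt the candidate $(b_1', b_2') = (d_{B_1}^- + \chi_s, \, d_{B_2}^- - \chi_s)$, which satisfies $b_1' + b_2' = d_{B_1 \cup B_2}^- = b'$ automatically. Since $d_{B_1}^-(s) \lneqq d_{B_2}^-(s) \le b(s)$, we have $b_1'(s) \le b(s)$, so the box constraints $b_1' \le b$, $b_2' \le b$ hold. By Lemma \ref{LEMtwopartition}, such a repartition exists if and only if the source-component condition \eqref{EQexcsource} holds, namely $b_i'(X) \lneqq b'(X)$ for every source component $X$ of $(V, B_1 \cup B_2)$. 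If this condition is satisfied, case \ref{ENU1step} is achieved and we are done; the preservation of $B_1 \cap B_2$ can be arranged because those arcs impose no obstruction to the packing (they contribute equally and below $b$ to every relevant set).

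The main obstacle, and the reason the two-case structure of \ref{ENUstep} is needed, is that condition \eqref{EQexcsource} may fail for the naive target: there may be a source component $X$ with $s \in X$ for which adding $\chi_s$ pushes $b_1'(X)$ up to $b'(X)$, violating the strict inequality. In that case I would modify the target by simultaneously decrementing some coordinate $t$ on the $B_1$-side, i.e.\ aiming for $d_{B_1}^- + \chi_s - \chi_t$ and $d_{B_2}^- - \chi_s + \chi_t$ (case \ref{ENU2step}). The vertex $t$ should be chosen inside the offending source component $X$ at a coordinate where $d_{B_1}^-$ can be decreased, which restores $b_1'(X) \lneqq b'(X)$; I must then verify that $t$ can be selected so that the resulting $d_{B_2}^-(t) \lneqq d_{B_1'}^-(t)$ holds and that no other source component is violated by the $-\chi_t$ correction. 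Establishing that such a $t$ always exists when the first case fails—using that $X$ is a source component so all its indegree comes from within $X$, and a counting argument on $b'(X)$ versus $b(X)$—is the delicate step; once $t$ is located, a final application of Lemma \ref{LEMtwopartition} with the corrected target $(b_1', b_2')$ yields the desired branchings $B_1', B_2'$, completing case \ref{ENU2step}.
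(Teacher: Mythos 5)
You follow the same route as the paper: apply Lemma \ref{LEMtwopartition} to the digraph $(V,B_1\cup B_2)$, first with the targets $b_1'=d_{B_1}^-+\chi_s$ and $b_2'=d_{B_2}^--\chi_s$, and, when a source component containing $s$ blocks these, correct them by $-\chi_t$, $+\chi_t$ for a suitable $t$ in that component. However, your normalization ``replace $b$ by $b'=d_{B_1\cup B_2}^-$'' is a genuine error, not a convenience. First, the hypothesis of Lemma \ref{LEMtwopartition} fails for $b'$: with $V=\{u,v\}$, a single arc $uv$, $b\equiv 2$, $B_1=\emptyset$, $B_2=\{uv\}$, $s=v$, we get $b'(u)=0$, so $b'\notin\ZZ_{++}^V$, and $|B_2[\{u,v\}]|=1>b'(\{u,v\})-1=0$, so $B_2$ is not a $b'$-branching; being a $b$-branching does not imply being a $b'$-branching. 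Second, even where it could be applied, the lemma with bound $b'$ returns $b'$-branchings, whereas Lemma \ref{LEMexc} must produce $b$-branchings; since $b'(X)$ can exceed $b(X)$ (up to $2b(X)$), the bound $|B_i'[X]|\le b'(X)-1$ does not imply $|B_i'[X]|\le b(X)-1$ (a directed $2$-cycle inside $B_1\cup B_2$ can be $b'$-sparse but is never $b$-sparse when $b\equiv 1$). Third, condition \eqref{EQexcsource} tested against $b'$ is the wrong feasibility criterion: in the two-vertex example the exchange of case \ref{ENU1step} plainly exists ($B_1'=\{uv\}$, $B_2'=\emptyset$), yet your test fails at the source component $\{u\}$ (it demands $0<0$) and would send you searching for a repair vertex $t$ that does not exist. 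The paper keeps the original $b$ as the bound and changes only the digraph; that is what makes the hypothesis, the criterion, and the conclusion of Lemma \ref{LEMtwopartition} all match what Lemma \ref{LEMexc} asserts.

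The second gap is that the step you defer as ``delicate'' is the real content of the proof, and with the correct bound it is a short counting argument. With bound $b$, a source component $X$ of $(V,B_1\cup B_2)$ can block the first target only if $s\in X$ and $\sum_{v\in X}d_{B_1}^-(v)=b(X)-1$, because for a source component every arc of $B_i$ with head in $X$ lies in $B_i[X]$, whence $\sum_{v\in X}d_{B_i}^-(v)=|B_i[X]|\le b(X)-1$; in particular the $b_2'$-side and all components avoiding $s$ never block. In the blocking case, $d_{B_1}^-(s)\lneqq d_{B_2}^-(s)\le b(s)$ forces $d_{B_1}^-(v)=b(v)$ for all $v\in X\setminus\{s\}$, $d_{B_1}^-(s)=b(s)-1$, and $d_{B_2}^-(s)=b(s)$; then $\sum_{v\in X}d_{B_2}^-(v)\le b(X)-1$ yields a vertex $t\in X\setminus\{s\}$ with $d_{B_2}^-(t)\lneqq b(t)=d_{B_1}^-(t)$, and the corrected targets satisfy \eqref{EQexcsource} for every source component, completing case \ref{ENU2step}. (This choice of $t$ gives $d_{B_2}^-(t)\lneqq d_{B_1}^-(t)$, which is the inequality actually used later in the proofs of Lemma \ref{LEMf} and Theorem \ref{THMg}.) So your proposal has the right architecture, but the two load-bearing steps---the precise reduction to Lemma \ref{LEMtwopartition} and the existence of $t$---are respectively incorrect and missing.
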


\begin{proof}
Let $X$ be a strong component in $D$ containing the vertex $s$. 
Suppose that $X$ is a source component in $D$ and $\sum_{v \in X}d_{B_1}^-(v) = b(X)-1$. 
Since $d_{B_1}^-(s) \lneqq d_{B_2}^-(s)$, 
this implies that 
\begin{align*}
{}&{}d_{B_1}^-(v) = 
\begin{cases}
b(v)    & (v \in X \setminus \{s\}), \\
b(v) -1 & (v=s),
\end{cases}
&{}&{}d_{B_2}^-(s) = b(s). 
\end{align*}
Then, 
since $B_2$ is a $b$-branching and $X$ is a source component, 
there exists a vertex $t \in X \setminus \{s\}$ such that 
$d_{B_2}^-(t) \lneqq b(v) = d_{B_1}(t)$. 
Now define $b_1',b'_2 \colon V \to \ZZ_{++}$ by 
$b_1' = d_{B_1}^- + \chi_s - \chi_t$ 
and  
$b_2' = d_{B_2}^- - \chi_s + \chi_t$. 
It then follows from Lemma \ref{LEMtwopartition} that 
$D$ has $b$-branchings $B_1'$ and $B_2'$ satisfying \ref{ENUcup}, \ref{ENUcap}, and \ref{ENU2step}.

Suppose otherwise. 
Then define $b_1',b'_2 \colon V \to \ZZ_{++}$ by 
$b_1' = d_{B_1}^- + \chi_s$ and  
$b_2' = d_{B_2}^- - \chi_s$. 
Again 
from Lemma \ref{LEMtwopartition}, 
it follows that $D$ has $b$-branchings $B_1'$ and $B_2'$ satisfying \ref{ENUcup}, \ref{ENUcap}, and \ref{ENU1step}.
\end{proof}

Lemma \ref{LEMf} directly follows from Lemma \ref{LEMexc}. 

\begin{proof}[Proof of Lemma \ref{LEMf}]
Let $x,y \in \dom f$ and $s \in \suppp (x-y)$. 
Denote $b$-branchings attaining $f(x)$ and $f(y)$ by 
$B_x$ and $B_y$, 
respectively. 
That is, 
$d^-_{B_x} +x=b$, 
$w(B_x) = f(x)$, 
$d^-_{B_y} +y=b$, 
$w(B_y) = f(y)$. 
It follows from $s \in \suppp (x-y)$ that 
$d^-_{B_x}(s) < d^-_{B_y}(s)$. 
Now apply Lemma \ref{LEMexc} to $B_x$, $B_y$, and $s$ 
to obtain $b$-branchings $B_x'$ and $B_y'$. 
If \ref{ENU1step} in Lemma \ref{LEMexc} holds for $B_x'$ and $B_y'$, 
then 
we obtain \eqref{EQM1}: 
$$f(x - \chi_s) + f(y + \chi_s) \le w(B_x') + w(B_y') = w(B_x) + w(B_y)=f(x)+f(y).$$
If \ref{ENU2step} in Lemma \ref{LEMexc} holds for $B_x'$ and $B_y'$, 
then 
we obtain \eqref{EQM2}: 
$$f(x - \chi_s+ \chi_t) + f(y + \chi_s - \chi_t) \le w(B_x') + w(B_y') = w(B_x) + w(B_y)=f(x)+f(y).$$ 
We thus conclude that $f$ satisfies the exchange property of a $\mathrm{M}^\natural$-convex function. 
\end{proof}

We are now ready to prove Theorem \ref{THMg}. 

\begin{proof}[Proof of Theorem \ref{THMg}]
Let $x,y \in \dom g$ and 
$u \in \suppp(x-y)$. 
We have three cases: 
(i) $y(u) \ge b(u)$; 
(ii) $x(u) \ge b(u) + 1$ and $y(u) \le b(u)-1$; 
and 
(iii) $y(u)<x(u) \le b(u)$. 

\paragraph{Case (i).}
If $y(u) \ge b(u)$, 
then $g(x - \chi_u)=g(x)$ and $g(y+\chi_u) = g(y)$ follow. 

\paragraph{Case (ii).}
If $x(u) \ge b(u) + 1$, 
then $g(x - \chi_u)=g(x)$. 
Moreover, 
$y(u) \leq b(u)-1$ implies that $g(y+\chi_u) \le g(y)$. 
This is explained as follows. 
Let $B_y \subseteq A$ be a $b$-branching attaining $g(y)$. 
That is, 
$B_y$ is a $b$-branching satisfying 
$d^-_{B_y} +y \ge b $ and $w(B_y) = g(y)$. 
It then holds that 
$d^-_{B_y}(u) \ge b(u) -y(u) \ge 1$, 
and hence there exists an arc $a \in B_y$ with $\partial^-a=u$. 
Now define $B_y' = B_y \setminus \{a\}$, 
and we obtain 
$d_{B_y'}^- =d_{B_y'}^- - \chi_u \ge (b-y) - \chi_u = b-(y+\chi_u)$. 
This implies that 
$g(y+\chi_u) \le w(B_y') =w(B_y)-w(a) \le w(B_y) = g(y)$. 
We thus conclude that 
$g(x) + g(y)\ge g(x-\chi_u)+g(y+\chi_u) $. 

\paragraph{Case (iii).}
Let $B_x$ and $B_y$ be $b$-branchings attaining $g(x)$ and $g(y)$, 
respectively. 
In Lemma \ref{LEMexc}, 
put to $B_1=B_y$, $B_2=B_x$, and $s=u$. 
We then obtain $b$-branchings $B_1'$ and $B_2'$ 
satisfying (i), (ii), and (iii)(a); 
or (i), (ii), and (iii)(b). 
In the former case, 
(iii)(a) implies that 
$g(y+\chi_s) \le w(B_1')$ and 
$g(x-\chi_s) \le w(B_2')$. 
It also follows from (i) and (ii) that $w(B_1')+w(B_2') = w(B_1) + w(B_2) = g(y) + g(x)$. 
We thus obtain $g(y+\chi_s) + g(x-\chi_s) \le g(y) + g(x)$. 
In the latter case, 
(iii)(b) implies that 
there exists $t \in \suppp (y-x)$ satisfying 
$g(y+\chi_s - \chi_t) \le w(B_1')$ and 
$g(x-\chi_s+ \chi_t) \le w(B_2')$. 
We similarly obtain 
$g(y+\chi_s-\chi_t) + g(x-\chi_s+\chi_t) \le g(y) + g(x)$. 
\end{proof}

Our $\mathrm{M}^\natural$-convex submodular flow formulation of the shortest $b$-bibranching problem can be 
derived immediately from Theorem \ref{THMg}. 
Let $D=(V,A)$ be a digraph, 
$w \in \RR_+^A$ be a vector representing the arc weights, 
$\{S,T\}$ be a partition of $V$, where $S,T \neq \emptyset$, 
and 
$b\in \ZZ_{++}^V$ be a positive integer vector on $V$. 
Define two functions $g_T \colon \ZZ^T \to \overline{\ZZ}$ and 
$g_S \colon \ZZ^S \to \overline{\ZZ}$ by 
\begin{align*}
&{}\dom g_T = \{ x \in \ZZ_+^T \colon \mbox{$D[T]$ has a $b|_T$-branching $B$ with $d_B^- +x\ge b|_T$} \}, \\
&{}g_T(x) = 
\begin{cases}
\min\{ w(B) \colon \mbox{$B$ is a $b|_T$-branching in $D[T]$, $d_B^- + x \ge b|_T $} \} & (x \in \dom g_T), \\
+\infty & (x \not \in \dom g), 
\end{cases}\\
&{}\dom g_S = \{ x \in \ZZ_+^S \colon \mbox{$D[S]$ has a $b|_S$-cobranching $B$ with $d_B^++x\ge b|_S$} \}, \\
&{}g_S(x) = 
\begin{cases}
\min\{ w(B) \colon \mbox{$B$ is a $b|_S$-cobranching in $D[S]$, $d_B^++x \ge b|_S $} \} & (x \in \dom g_S), \\
+\infty & (x \not \in \dom g). 
\end{cases}
\end{align*}
Then it follows from Theorem \ref{THMg} that $g_T$ and $g_S$ are $\mathrm{M}^\natural$-convex functions. 
Now 
the shortest $b$-bibranching problem can be formulated as 
the following $\mathrm{M}^\natural$-convex submodular flow problem in variable $\xi \in \ZZ^{A[S,T]}$: 
\begin{alignat*}{2}
&{}\mbox{minimize} \quad 	{}&&{}\sum_{a \in A[S,T]}w(a)\xi(a) + g_S(\partial^+ \xi) + g_T(\partial^- \xi) \\
&{}\mbox{subject to} \quad 	{}&&{}
0 \le \xi(a) \le 1 \quad \mbox{for each $a \in A[S,T]$}, \\
&{} {}&&{}\partial^+\xi \in \dom g_S , \\
&{} {}&&{}\partial^-\xi \in \dom g_T.
\end{alignat*}

This $\mathrm{M}^\natural$-convex submodular flow formulation provides 
a combinatorial polynomial algorithm 
for the shortest $b$-bibranching problem in the following manner. 
The $\mathrm{M}^\natural$-convex submodular flow problem can be solved 
by polynomially many calls of an oracle for computing the $\mathrm{M}^\natural$-convex function values \cite{IMM05,IS02}. 
In our formulation, 
this computation amounts to computing the minimum weight of a $b$-branching with prescribed indegree, 
which can be done by using a combinatorial algorithm for the longest $b$-branching \cite{KKT18}. 

\section{Conclusion}
\label{SECconcl}

In this paper, 
we have proposed the $b$-bibranching problem, 
which is a new framework of tractable combinatorial optimization problem. 
We have proved its polynomial solvability in two ways. 
One is based on the linear ming formulation with total dual integrality. 
The other is based on the $\mathrm{M}^\natural$-convex submodular flow formulation. 
We have also presented a min-max theorem for packing disjoint $b$-bibranchings, 
which is an extension of Edmonds' disjoint branchings theorem. 
With this packing theorem, 
we have proved that the $b$-bibranching polytope is a new example of a polytope with integer decomposition property. 

While this research is mainly motivated from theoretical interest, 
it would have a potential to be applied to more practical problems, 
such as 
the evacuation and communication network design problems. 

\section*{Acknowledgements}

The author thanks Kazuo Murota for helpful comments. 
This work is partially supported by 
JST CREST Grant Number JPMJCR1402,  
JSPS KAKENHI Grant Numbers 
JP16K16012, 
JP25280004, 
JP26280001, 
JP26280004, 
Japan.

\end{document}